\tikzset{
    >=latex,                    
    graph/.style={circle,fill,scale=0.5}, 
    smallGraph/.style={circle,fill,scale=0.25}, 
    labeled/.style={circle,draw}  
}
\newcommand{\newData}[2]{\newcommand{#1}{\DataSty{#2}\xspace}}
\newcommand{\newFunc}[2]{\newcommand{#1}{\FuncSty{#2}\xspace}}
\newcommand{\DS}[1]{\DataSty{#1}}
\newcommand{\concept}[1]{\textbf{#1}}
\newcommand{\etal}{\textit{et al.}}
\newcommand{\Tuple}[1]{\left\langle{#1}\right\rangle}
\DeclarePairedDelimiter{\card}{\lvert}{\rvert}
\DeclarePairedDelimiter{\floor}{\lfloor}{\rfloor}
\newcommand{\ComplexityClass}[1]{\ensuremath{\mathrm{#1}}}
\newcommand{\DeclareComplexityClass}[2]{\newcommand{#1}{\ComplexityClass{#2}}}
\DeclareComplexityClass{\classNP}{NP}
\DeclareComplexityClass{\classcoNP}{coNP}
\DeclareComplexityClass{\classEXP}{EXP}
\DeclareComplexityClass{\classNEXP}{NEXP}
\DeclareComplexityClass{\classP}{P}
\DeclareComplexityClass{\classPpoly}{P/poly}
\DeclareComplexityClass{\classR}{R}
\DeclareComplexityClass{\classRP}{RP}
\DeclareComplexityClass{\classcoR}{coR}
\DeclareComplexityClass{\classcoRP}{coRP}
\DeclareComplexityClass{\classZPP}{ZPP}
\DeclareComplexityClass{\classBPP}{BPP}
\DeclareComplexityClass{\classBPL}{BPL}
\DeclareComplexityClass{\classBQP}{BQP}
\DeclareComplexityClass{\classPP}{PP}
\DeclareComplexityClass{\classSharpP}{\#P}
\DeclareComplexityClass{\classParityP}{⊕P}
\DeclareComplexityClass{\classL}{L}
\DeclareComplexityClass{\classNL}{NL}
\DeclareComplexityClass{\classcoNL}{coNL}
\DeclareComplexityClass{\classAL}{AL}
\DeclareComplexityClass{\classTIME}{TIME}
\DeclareComplexityClass{\classSPACE}{SPACE}
\DeclareComplexityClass{\classNTIME}{NTIME}
\DeclareComplexityClass{\classcoNTIME}{NTIME}
\DeclareComplexityClass{\classNSPACE}{NSPACE}
\DeclareComplexityClass{\classASPACE}{ASPACE}
\DeclareComplexityClass{\classcoNSPACE}{coNSPACE}
\DeclareComplexityClass{\classATIME}{ATIME}
\DeclareComplexityClass{\classAP}{AP}
\DeclareComplexityClass{\classPSPACE}{PSPACE}
\DeclareComplexityClass{\classNPSPACE}{NPSPACE}
\DeclareComplexityClass{\classAPSPACE}{APSPACE}
\DeclareComplexityClass{\classIP}{IP}
\DeclareComplexityClass{\classDelta}{\Delta}
\DeclareComplexityClass{\classSigma}{\Sigma}
\DeclareComplexityClass{\classPi}{\Pi}
\DeclareComplexityClass{\classPH}{PH}
\DeclareComplexityClass{\classRL}{RL}
\DeclareComplexityClass{\classSL}{SL}
\DeclareComplexityClass{\classAC}{AC}
\DeclareComplexityClass{\classNC}{NC}
\DeclareComplexityClass{\classFP}{FP}
\DeclareComplexityClass{\classUP}{UP}
\DeclareComplexityClass{\classESO}{ESO}
\DeclareComplexityClass{\classSO}{SO}
\DeclareComplexityClass{\classPCP}{PCP}
\DeclareComplexityClass{\classFO}{FO}
\DeclareComplexityClass{\classACC}{ACC}
\DeclareComplexityClass{\classTC}{TC}
\DeclareComplexityClass{\classPromiseBPP}{promiseBPP}
\DeclareComplexityClass{\classPromiseP}{promiseP}
\DeclareComplexityClass{\BPTISP}{BPTISP}
\DeclareComplexityClass{\BPL}{BPL}
\DeclareComplexityClass{\BP}{BP}
\DeclareComplexityClass{\POP}{POP}
\DeclareComplexityClass{\CRN}{CRN}
\DeclareComplexityClass{\GOSSIP}{GOSSIP}
\DeclareComplexityClass{\PULL}{PULL}
\DeclareComplexityClass{\PUSH}{PUSH}
\DeclareComplexityClass{\MATCHING}{MATCHING}
\DeclareComplexityClass{\SHUFFLE}{SHUFFLE}
\DeclareComplexityClass{\BETWEEN}{BETWEEN}
\newtheorem{theorem}{Theorem}[section]
\newtheorem{lemma}[theorem]{Lemma}
\newtheorem{definition}[theorem]{Definition}
\numberwithin{equation}{section}
\DeclareMathOperator{\E}{E}
\DeclareMathOperator{\Pred}{Pred}
\newcommand{\Prob}[1]{\Pr\left[{#1}\right]}
\newcommand{\ProbCond}[2]{\Pr\left[{#1}\;\middle\vert\;{#2}\right]}
\newcommand{\Exp}[1]{\E\left[{#1}\right]}
\newcommand{\Set}[1]{\left\{{#1}\right\}}
\newcommand{\SetWhere}[2]{\left\{{#1}\;\middle\vert\;{#2}\right\}}
\newcommand{\divides}{\mid}
\DeclarePairedDelimiter{\parens}{(}{)}
\newData{\Role}{role}
\newData{\Follower}{follower}
\newData{\Clock}{clock}
\newData{\Init}{init}
\newData{\FindLeft}{find-left}
\newData{\FindRight}{find-right}
\newData{\Winnow}{winnow}
\newData{\Move}{move}
\newData{\Done}{done}
\newData{\Mark}{mark}
\newData{\Left}{left}
\newData{\Right}{right}
\newData{\State}{state}
\newData{\TapeSymbol}{symbol}
\newData{\Count}{count}
\newData{\Direction}{direction}
\newData{\Halt}{halt}
\newData{\TM}{TM}
\newData{\Coin}{coin}
\newData{\Heads}{heads}
\newData{\Tails}{tails}
\newFunc{\Step}{step}
\title{Stochastic well-structured transition systems}
\author{James Aspnes\\Yale University}
\date{\quad}
\begin{document}

\maketitle

\begin{abstract}

    Extending well-structured transition
    systems~\cite{Finkel1987,FinkelS2001} to incorporate a
    probabilistic scheduling rule, we define a new class of
    \emph{stochastic} well-structured transition systems that includes
    population protocols~\cite{AngluinADFP2006}, chemical reaction
    networks~\cite{Gillespie1977}, and many common gossip models; as
    well as augmentations of these systems by an oracle that exposes a
    total order on agents as in population protocols in the comparison
    model~\cite{GanczorzGJS2023} or an equivalence relation as in
    population protocols with unordered data~\cite{BlondinL2023}.

    We show that any implementation of a phase clock
    in these systems either stops or ticks too fast after polynomially
    many expected steps, and that any terminating computation in these
    systems finishes or fails in expected polynomial time.
    This latter property allows 
    an exact characterization of the computational
    power of many stochastic well-structured
    transition systems augmented with a total order or
    equivalence relation on agents, showing that these compute exactly the
    languages in $\classBPP$, while the corresponding unaugmented
    systems compute just the symmetric languages in $\classBPL$.
\end{abstract}

\section{Introduction}

We define a class of \concept{stochastic well-structured transition
systems} (SWSTSs) that includes many common models of distributed systems
involving finite-state agents with random scheduling, and show that
systems in this class that attempt to delay generating a signal for
some amount of time can only do so if for a time polynomial in the
number of agents. The key idea is to extend the notion of a
\concept{well-structured transition
system} (WSTS)~\cite{Finkel1987,FinkelS2001}, a system whose configurations
from a well-quasi-order that is a simulation with respect to the
transition relation, by adding probabilities on transitions that are
polynomial in a measure of the size of the configuration.

This extension allows proving that stochastic well-structured
transition systems are bad at waiting: given an upward-closed target
set, any configuration can either reach the target set in a constant
number of steps with polynomial probability or cannot reach it at all.
This has implications for building clocks (any clock protocol
eventually fails or advances too fast after an expected polynomial number of
ticks) and terminating protocols (any protocol that explicitly
terminates eventually fails or terminates after expected polynomial
time). We use these results to give an exact characterization of the
computational power of a variety of models that fit within our
framework.

Though we are concerned with random scheduling,
the present work is inspired by previous work analyzing population
protocols~\cite{AngluinADFP2006} with \emph{adversarial} scheduling using
well-quasi-orders. It has previously been observed in this context
that population protocols are instances of well-structured transition
systems~\cite{EsparzaGLM2016}, and the use of Dickson's
Lemma~\cite{Dickson1913} to show that the configuration spaces of
population protocols can be well-quasi-ordered is a key step in
the classic demonstration by Angluin~\etal~\cite{AngluinAER2007} that
population protocols stably computed precisely the semilinear
predicates, the lower bound of Doty and Soloveichik on leader election
in population protocols~\cite{DotyS2018}, the proof by Chen~\etal~\cite{ChenCDS2017} that chemical
reaction networks for many problems are subject to inherent
bottlenecks called \concept{speed faults}, and the result of Mathur and
Ostrovsky~\cite{MathurO2022} characterizing the power of
self-stabilizing population protocols. To the best of our knowledge,
our extension of well-structured transition systems to a model
representing randomized scheduling is new.

The paper is organized as follows. In Section~\ref{section-WSTS},
we discuss well-structured transition systems, including a discussion of
well-quasi-orders and the connections between well-structured
transition systems, population protocols, and demonstrating
reachability. In Section~\ref{section-stochastic-WSTS}, we define
stochastic well-structured transition systems. Our main results about
waiting are shown in Section~\ref{section-finish-or-fail}, including
the central property that stochastic well-structured transition systems
can only wait for a polynomial number of steps
(Theorem~\ref{theorem-finish-or-fail}), as well as several
consequences for clocks and terminating computations.
A catalog of distributed computing models with randomized scheduling
that yield SWSTSs is given in Section~\ref{section-examples};
these include \concept{population protocols}~\cite{AngluinADFP2006},
closed instances of \concept{chemical reaction networks}~\cite{Gillespie1977},
various synchronous \concept{gossip}
models~\cite{BertierBK2009,BechettiCNPS2014,BoczkowskiNFK2018,ArchivioKNV2024},
the synchronous token-passing model of~\cite{AlistarhGR2022},
the \concept{comparison model} of~\cite{GanczorzGJS2023}, \concept{unordered population
protocols}~\cite{BlondinL2023}, and others.
Section~\ref{section-computation} provides some additional definitions
needed to describe computation by SWSTSs.
Section~\ref{section-simulating-SWSTS} gives uniformity conditions on
SWSTSs that allow such computations to be simulated in either
bounded-error polynomial time ($\classBPP$) or bounded-error
polynomial-time and logarithmic space ($\classBPL$). Both cases
rely on Theorem~\ref{theorem-finish-or-fail} to show that the
simulated protocol necessarily finishes in polynomially many expected
steps. Section~\ref{section-simulating-TM} proves converse results for
many of the models described in Section~\ref{section-examples},
showing that these models compute precisely the
predicates in $\classBPP$ or the symmetric predicates in $\classBPL$,
depending on how much structure each model imposes on its components.
Possibilities for future work are discussed in
Section~\ref{section-conclusion}.

\section{Well-structured transition systems}
\label{section-WSTS}

The configurations of distributed systems can often be quasi-ordered
by some sort of embedding relation, where $s≤t$ if $s$ can be viewed
as a subconfiguration of $t$, and the relation $≤$ being a quasi-order simply means
that it is reflexive and transitive.\footnote{This differs from a
partial order in that having \concept{equivalent} elements $x$ and $y$
with $x≤y$ and $y≤x$ does not necessarily imply
$x=y$. We will see examples of this when we consider embeddings that
may permute agents without adding or removing any.} 
We are particularly interested in models where the embedding relation is
a \concept{well-quasi-order}~\cite{Kruskal1972} or \concept{wqo}, a quasi-order that has
no infinite descending chains or infinite antichains.\footnote{See
§\ref{section-wqos} for more details about wqos.}

If the embedding is a wqo and
the transitions in a system appropriately respect the embedding,
we get a \concept{well-structured transition
system}~\cite{Finkel1987,FinkelS2001} or \concept{WSTS}.
Formally, a WSTS $(S,≤,→)$ consists of a set of states $S$ and two binary relations $≤$ and $→$, where
$≤$ is a well-quasi-order on $S$ and $→$ is a
\concept{transition relation} that is \concept{compatible} with $≤$ in
the sense that whenever $s→s'$ and $s≤t$, there exists $t'$ such
that $t→t'$ and $s'≤t'$.\footnote{This definition, due to Finkel and
Schnoebelen~\cite{FinkelS2001}, is a streamlined version of the
original definition of Finkel~\cite{Finkel1987}. But as it is now
standard, we will use it here. See also Schmitz and
Schnoebelen~\cite{SchmitzS2013}, who provide some additional
motivation for the definition and discuss subsequent developments of
the theory.}

When $s≤t$, $s→s'$, $s'≤t$, $t→t'$, we will say that the transition
$s→s'$ \concept{occurs} in $t→t'$.

We will mostly be considering transition systems where we have a
finite collection of
agents each with a state from some finite set $Q$. This makes a state
of the system as a whole, which we will refer to as a
\concept{configuration} to distinguish it from the states of individual agent,
an element of $Q^*$, possibly with some additional structure. The
main idea is to let $s≤t$ if there is an injection $f$ mapping agents
in $s$ to agents in $t$ that preserves any necessary structure in the
configuration. In this case, the compatibility property says that if
there exists an injection $f:s→t$ and a transition $s→s'$, then there
is a configuration $t'$ with an injection $f':s'→t'$ and a transition
$t→t'$. For many models, the easy way to find a transition $t→t'$ is
just to let each agent in the image set $f(s)$ do whatever its
corresponding agent did in $s$, and let any remaining agents in $t$
either do nothing (if permitted by the model) or do something that
doesn't affect agents in $f(s)$. In this case, the fact that $s→s'$
occurs in $t→t'$ just means that the agents in $f(s)$ observe a
transition that looks locally like $s→s'$.

\subsection{Example: Population protocols}
\label{section-pop-WSTS}

A simple example of a well-structured transition system
is a (finite-state) \concept{population
protocol}~\cite{AngluinADFP2006}.
In the population protocol model, we have $n$ agents with states in
$Q$, giving configurations in $Q^n$.
A \concept{step} $s→s'$ of a population
protocol consists of choosing two agents $i$ and $j$ in $s$, and
updating their states according to a \concept{joint transition
function} $δ:Q^2→Q^2$, so that $(s'_i,s'_j) = δ(s_i,s_j)$ and $s'_k =
s_k$ for all $k∉\Set{i,j}$. A simple example of a population protocol
is the cancellation-based
leader election protocol~\cite{AngluinADFP2006} with two states $L$ (leader)
and $F$ (follower), with a single non-trivial transition $L,L → L,F$
that prunes excess leaders. Note that interactions in a population
protocol are not symmetric: the first agent in the pair (the
\concept{initiator}) is distinguished from the second (the
\concept{responder}) and the transition relation can take this
distinction into account.

An embedding $s≤t$ for population protocols is an order-preserving injective map $f$ from
the indices of $s$ to the indices of $t$, such that $s_i = t_{f(i)}$
for each $i$. That this relation $≤$ is a quasi-order follows from
Higman's Lemma~\cite{Higman1952}.
To show that $≤$ and $→$ are compatible, for any $s→s'$
and $s≤t$, construct a transition $t→t'$ by taking the agents
$i$ and $j$ that interact in $s$ and applying the same interaction to
agents $f(i)$ and $f(j)$ in $t$.

\subsection{Identifying well-quasi-orders}
\label{section-wqos}

Because well-structured transition systems are so closely connectecd
to well-quasi-orders, it will be useful to establish some further
characterizations of well-quasi-orders.

There are several equivalent definitions of a
well-quasi-order~\cite{Kruskal1972}. 
The most widely used is that a
$≤$ is a wqo if it is a quasi-order (a relation that is reflexive and
transitive) such that in any infinite sequence of elements $x_0,
x_1, x_2, \dots$ there exist indices $i < j$ such that $x_i ≤ x_j$.
This excludes both infinite antichains (where all values are
incomparable with each other) and infinite descending chains (where
each value is strictly smaller than its predecessor). The latter
condition provides the connection to the more common notion of well-orderings.

An equivalent definition that can be more difficult to check but 
easier to use is that $≤$ is a
well-quasi-order if it is a quasi-order with the \concept{finite basis
property}~\cite{Higman1952}, which says that any
\concept{upward-closed} set is the \concept{upward closure} of a
finite set.\footnote{Here a set $T$ is upward-closed if $x∈T$ and $x≤y$ implies
$y∈T$, and the upward-closure of a set is its smallest upward-closed
superset.} The finite basis property is the key idea behind the study of well-structured
transition systems. It means that 
reachability of upward-closed sets of states, even in an infinite state space, can be
reduced to reachability of a finite basis of those sets.

Well-quasi-orders have a long history, and several classical results
exist showing that particular embedding relations yield
well-quasi-orders. We will mostly be using the following well-known
facts:

\begin{itemize}
    \item Let $Q$ be a finite set and let $≤$ be a quasi-order on
        $Q$.
        Then $Q$ is a well-quasi-order. (Proof: Any infinite sequence
        of elements of $Q$ contains duplicates $x_i = x_j$ with $i<j$.)
        
        A common special case is when $≤$ is the minimal quasi-order
        where $a≤b$ if and only if $a=b$.
    \item Let $(Q,≤)$ be a well-order. The $(Q,≤)$ is a
        well-quasi-order. (Proof: Any nonempty set in a well-order has a unique minimal element,
        giving a finite basis of size one for any upward-closed set.)
    \item \concept{Dickson's Lemma}~\cite{Dickson1913}:
        Let $Q$ be a finite set. Then component-wise ordering on vectors in
        $ℕ^Q$, where $x≤y$ if $x_i ≤ y_i$ for all $i$, is a
        well-quasi-order.
    \item 
        \concept{Higman's Lemma}~\cite{Higman1952}:
        Let $(Q,≤)$ be a well-quasi-order. Given finite sequences $x,y∈Q^*$, let an 
        \concept{embedding} of $x$ into $y$
        be a strictly increasing function $f$ from indices of $x$ to
        indices of $y$ such that $x_i ≤ y_{f(i)}$ for all $i$, and define
        $x≤y$ if there exists an embedding of $x$ into $y$. Then $≤$ is
        a well-quasi-order.
\end{itemize}

\subsection{Reachability}

Following~\cite{FinkelS2001}, we will use the notation
$s \stackrel{k}{→} t$ to say that there exists a sequence $s = s_0 →
s_1 → \dots s_k = t$ and $s \stackrel{*}{→} t$ to say that $s
\stackrel{k}{→} t$ for some $k$. We will also write
$s\stackrel{≤k}{→}t$
when $s \stackrel{\ell}{→} t$ for some $0 ≤ \ell ≤ k$.
Equivalently, we will say that $t$ is reachable in $k$ steps, is
reachable, or is reachable in at most $k$ steps in each of these
cases. When $T$ is a set, we will write $s \stackrel{k}{→} T$ 
if there exists some $t∈T$ such that $s \stackrel{k}{→} t$, and
similarly for the other reachability conditions.

We will use the notation
\begin{align*}
    \Pred(T) &= \SetWhere{s}{s→T}
    \\
    \Pred^k(T) &= \SetWhere{s}{s\stackrel{k}{→}T},
    \\
    \Pred^{≤k}(T) &= \SetWhere{s}{s\stackrel{≤k}{→}T},
    \intertext{and}
    \Pred^*(T) &= \SetWhere{s}{s\stackrel{*}{→}T}.
\end{align*}

Much of the usefulness of WSTSs comes from the following standard lemma:
\begin{lemma}
    \label{lemma-predecessors}
    Let $(S,≤,→)$ be a well-structured transition system. Then for any
    upward-closed $T⊆S$ and any $k$, $\Pred(T)$, $\Pred^k(T)$, 
    $\Pred^{≤k}(t)$, and
    $\Pred^*(T)$ are all upward-closed.
\end{lemma}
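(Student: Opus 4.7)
The plan is to reduce everything to a single core fact: if $T$ is upward-closed, then $\Pred(T)$ is upward-closed. Once that is established, the claims for $\Pred^k(T)$, $\Pred^{\le k}(T)$, and $\Pred^*(T)$ follow by routine induction and by observing that arbitrary unions of upward-closed sets are upward-closed.

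The core step is a direct unpacking of the compatibility axiom. Suppose $s \in \Pred(T)$ and $s \le t$. By definition of $\Pred(T)$, there exists some $s' \in T$ with $s \to s'$. Applying compatibility to $s \to s'$ together with $s \le t$ gives a configuration $t'$ with $t \to t'$ and $s' \le t'$. Since $T$ is upward-closed and $s' \in T$, we obtain $t' \in T$, whence $t \to t' \in T$ and therefore $t \in \Pred(T)$. This is exactly what is needed, and it is the only place the WSTS axiom is invoked.

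With that in hand, I would handle $\Pred^k(T)$ by induction on $k$. For the base case, $\Pred^0(T) = T$ is upward-closed by hypothesis. For the inductive step, I would use the identity $\Pred^{k+1}(T) = \Pred(\Pred^k(T))$: by the induction hypothesis $\Pred^k(T)$ is upward-closed, so applying the core step gives that $\Pred^{k+1}(T)$ is upward-closed. Then $\Pred^{\le k}(T) = \bigcup_{i=0}^{k} \Pred^i(T)$ and $\Pred^*(T) = \bigcup_{k \ge 0} \Pred^k(T)$ are unions of upward-closed sets and are therefore themselves upward-closed.

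There is no real obstacle here; the argument is essentially the standard observation that compatibility of $\le$ with $\to$ lifts from single transitions to arbitrary length sequences and to unions of reachability predicates. The only subtlety worth spelling out cleanly is the identification $\Pred^{k+1}(T) = \Pred(\Pred^k(T))$, which I would justify by writing out one step of the witnessing path $s = s_0 \to s_1 \to \dots \to s_{k+1} \in T$ and noting that $s_1 \in \Pred^k(T)$.
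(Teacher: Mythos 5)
Your proof is correct and follows essentially the same route as the paper's: unpack compatibility once to show $\Pred(T)$ is upward-closed, then extend to $\Pred^k(T)$ by induction and to $\Pred^{\le k}(T)$ and $\Pred^*(T)$ by closure of upward-closedness under unions. The only difference is that you spell out the identity $\Pred^{k+1}(T)=\Pred(\Pred^k(T))$ explicitly, which the paper leaves implicit.
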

\begin{proof}
    For $\Pred(T)$, observe that $s∈\Pred(T)$ means that there is some
    $t∈T$ such that $s→t$. But then if $s'≥s$, there is some $t'≥t$
    with $s'→t'$. Having $t'≥t$ puts $t'$ in $T$ (which is
    upward-closed); this implies that $s'$ is in $\Pred(T)$.

    This result extends to $\Pred^k(T)$ by a straightforward induction and to
    $\Pred^{≤k}(T)$ and $\Pred^*(T)$ by observing that the property of
    being upward-closed is preserved by unions.
\end{proof}

When $T$ is upward-closed, the sets $\Pred^{≤0}(T) ⊆ \Pred^{≤1}(T) ⊆
\dots$ form an increasing sequence of upward-closed sets. It is
straightforward to show
(see, for example, \cite[Lemma 3.4]{AbdullaCJT1996}) that
for any wqo, any such sequence eventually stabilizes with
$\Pred^{≤k+1}(T) = \Pred^{≤k}(T)$ for some $k$, 
meaning that for any $T$,
there is some finite $k$ such that 
$\Pred^*(T) = \Pred^{≤k}(T)$.


This is a very strong property, which forms the basis for much of the
usefulness of WSTSs in verification (see~\cite{AbdullaCJT1996} for
examples). We can restate it as:
\begin{lemma}
    \label{lemma-maximum-distance}
    Let $(S,≤,→)$ be a well-structured transition system, and let $T⊆S$ be
    upward-closed. Then there is a fixed $k$ such that 
    for any $s∈S$, either $s$ can reach $T$ in at most $k$ steps, or $s$
    cannot reach $T$ at all.
\end{lemma}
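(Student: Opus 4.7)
The plan is to take the lemma as a restatement of the stabilization fact stated in the paragraph immediately preceding it: namely, that the increasing sequence $\Pred^{\le 0}(T) \subseteq \Pred^{\le 1}(T) \subseteq \dots$ must eventually become constant, so that $\Pred^*(T) = \Pred^{\le k}(T)$ for some finite $k$. The $k$ that appears in the lemma statement is exactly this stabilization index.

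First I would invoke Lemma~\ref{lemma-predecessors} to guarantee that each $\Pred^{\le i}(T)$ is upward-closed, so that the sequence is an ascending chain of upward-closed sets. Then I would prove the stabilization: if no such $k$ existed, we could pick $x_i \in \Pred^{\le i+1}(T) \setminus \Pred^{\le i}(T)$ for each $i$, and apply the wqo property to the sequence $x_0, x_1, \dots$ to find indices $i < j$ with $x_i \le x_j$. Upward-closedness of $\Pred^{\le i+1}(T)$ then forces $x_j \in \Pred^{\le i+1}(T) \subseteq \Pred^{\le j}(T)$, contradicting our choice of $x_j$. (Equivalently one could appeal to the finite basis property: each term in the chain has a finite basis, and once the basis of $\Pred^{\le k}(T)$ is contained in $\Pred^{\le k-1}(T)$ the chain stops growing.)

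With stabilization in hand, the lemma is immediate. Fix $k$ such that $\Pred^{\le k+1}(T) = \Pred^{\le k}(T)$; a routine induction then shows $\Pred^{\le \ell}(T) = \Pred^{\le k}(T)$ for every $\ell \ge k$, hence $\Pred^*(T) = \bigcup_{\ell \ge 0} \Pred^{\le \ell}(T) = \Pred^{\le k}(T)$. So for any $s$, either $s \in \Pred^*(T) = \Pred^{\le k}(T)$, meaning $s$ reaches $T$ in at most $k$ steps, or $s \notin \Pred^*(T)$, meaning $s$ cannot reach $T$ at all.

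The only potentially subtle step is the stabilization argument itself, but that is a well-known consequence of the wqo property and is already signposted in the excerpt with a citation to~\cite{AbdullaCJT1996}; given that, nothing else in the proof requires more than bookkeeping.
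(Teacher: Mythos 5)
Your proposal is correct and matches the paper's approach: the paper gives no separate proof of this lemma, presenting it as a restatement of the stabilization fact $\Pred^*(T)=\Pred^{\le k}(T)$ cited from Abdulla~\etal, which is exactly what you use. Your filled-in stabilization argument (choosing $x_i\in\Pred^{\le i+1}(T)\setminus\Pred^{\le i}(T)$ and deriving a contradiction from the wqo property and upward-closedness) is the standard proof and is sound.
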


\section{Stochastic well-structured transition systems}
\label{section-stochastic-WSTS}

The standard notion of well-structured transition systems does not incorporate an
explicit scheduling mechanism: the $→$ relation defines which
configurations \emph{can} follow each other but says nothing about how
likely it is that these
configurations \emph{will} follow each other.
This is acceptable when one is interested primarily in reachability,
but we would like to extend the standard definition to cover models
with explicit transition probabilities.
Many of the models we are considering have a probabilistic
scheduling rule that makes the choice of transition at each step
random according to some distribution.

\begin{definition}
    \label{definition-SWSTS}
    A \concept{stochastic well-structured
    transition system} (\concept{SWSTS}) 
    $(S,≤,→,\card{⋅},\Pr)$
    consists of:
    \begin{itemize}
        \item A WSTS $(S,≤,→)$.
        \item A function $\card{⋅}$ that assigns to each configuration
            $s∈S$ a \concept{weight} $\card{s}$.
        \item A function $\Pr$ that assigns a \concept{transition
            probability}
            $\Prob{s→s'}$ to each transition $s→s'$,
             making an execution
            $X_0→X_1→X_2→\dots$ of the system into a trajectory of
            a Markov chain with $\ProbCond{X_{i+1} = s'}{X_i = s} =
            \Prob{s→s'}$.
    \end{itemize}
    satisfying the requirement that:
    \begin{itemize}
        \item 
            For each transition $s→s'$, there is a constant $k$, such that
            in any execution of the system,
            $\ProbCond{X_{i+1}≥s'}{X_i≥s} = Ω\parens*{\card{X_i}^{-k}}$.
    \end{itemize}
\end{definition}

We refer to the last condition as having 
\concept{polynomial transition probabilities}. Note that there does not need to be a
single transition $t→t'$ extending $s→s'$; it is enough that starting
from any $t≥s$ that the set of possible transitions $t→t'$ with $t'≥s'$ has
polynomial probability in total.

An example of an SWSTS is a population protocol. 
With randomized scheduling, the weight
$\card{X_i}$ of a configuration $X_i$ is the number of agents $n$ in
$X_i$. Since each possible pairwise interaction occurs with the same
probability $\frac{1}{n(n-1)}$, whatever pair of agents interact in a
transition $s→s'$ will also interact in any $X_i$ with $s≤X_i$ with the same
probability $\frac{1}{n(n-1)} = Ω\parens*{\card{X_i}^{-2}}$
In this case the exponent $k$ is fixed, but in general the
choice of $k$ may depend on the choice of transition $s→s'$.

Often we will ask that the weight function is preserved by
transitions: if $s→s'$, then $\card{s} = \card{s'}$. An SWSTS with a
weight function with this property will be called \concept{closed}.
Examples of closed SWSTSs include most distributed systems where a
transition only changes the states of agents and not their identities
or number, and chemical reaction networks~\cite{Gillespie1977} where
the weight is defined as the total atomic number and any
transition preserves this quantity. Examples of non-closed
SWSTSs might be systems that allow new agents to arrive over time or
chemical reaction networks that increase their volume by
absorbing feedstocks that are not otherwise counted in the weight.

\section{Reachability in polynomial expected steps}
\label{section-finish-or-fail}

Random scheduling allows us to compute the expected time until a
protocol reaches a desired configuration. In this section, we show
that executions of closed SWSTSs
finish or fail in a polynomial
number of steps on average, when finishing is characterized by
reaching some upward-closed set $T$ and failing is characterized by
reaching a configuration that will never reach $T$.

We start with a more limited result, which is that any reachable upward-closed set
$T$ is reached in a constant number of steps with polynomial
probability:
\begin{lemma}
    \label{lemma-polynomial-path}
    Let $X$ be an execution of a closed SWSTS
    $(S,≤,→,\card{⋅},\Pr)$.
    Let $T⊆S$ be upward-closed. Let $X_i ∈ \Pred^*(T)$. Then there are fixed constants
    $\ell$ and $k$ such that $\Prob{∃j≤\ell: X_{i+j} ∈ T} ≥
    Ω\parens*{\card{X_t}^{-k}}$.
\end{lemma}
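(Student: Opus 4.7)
The plan is to combine Lemma~\ref{lemma-maximum-distance} (uniform bound on path length to $T$), the finite basis property for the upward-closed set of predecessors, and the polynomial transition probability axiom, by explicitly chasing a short witness path from a basis element up to $T$.

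First, apply Lemma~\ref{lemma-maximum-distance} to the upward-closed set $T$ to obtain a single constant $\ell$ such that $\Pred^*(T) = \Pred^{≤\ell}(T)$. By Lemma~\ref{lemma-predecessors}, the set $\Pred^{≤\ell}(T)$ is upward-closed, so by the finite basis property it is the upward closure of a finite basis $B$. For each $s \in B$, fix once and for all a witness path $s = s^{(s)}_0 \to s^{(s)}_1 \to \dots \to s^{(s)}_{j_s}$ with $j_s \le \ell$ and $s^{(s)}_{j_s} \in T$. Since $X_i \in \Pred^*(T)$, there exists some $s \in B$ with $s \le X_i$.

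Next, condition on this choice of $s$ and follow the witness path step by step, invoking the polynomial transition probability property once per edge. For each transition $s^{(s)}_m \to s^{(s)}_{m+1}$ there is a constant $k^{(s)}_m$ such that, from any state $X_{i+m}$ with $X_{i+m} \ge s^{(s)}_m$, the probability of taking a one-step transition to some $X_{i+m+1} \ge s^{(s)}_{m+1}$ is at least $\Omega\bigl(\card{X_{i+m}}^{-k^{(s)}_m}\bigr)$. Because the SWSTS is closed, $\card{X_{i+m}} = \card{X_i}$ for every $m$, so chaining these conditional bounds along the path gives
\[
\Prob{X_{i+j_s} \ge s^{(s)}_{j_s}} \ge \Omega\bigl(\card{X_i}^{-K_s}\bigr),
\qquad K_s = \sum_{m=0}^{j_s - 1} k^{(s)}_m.
\]
Since $T$ is upward-closed and $s^{(s)}_{j_s} \in T$, this event forces $X_{i+j_s} \in T$ with $j_s \le \ell$.

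Finally, to obtain a single exponent $k$ independent of $s$, take $k = \max_{s \in B} K_s$; this maximum is finite because $B$ is finite and each $K_s$ is a finite sum of constants. The only subtlety, and arguably the main point, is ensuring that all the constants involved are genuinely uniform: the length bound $\ell$ comes from Lemma~\ref{lemma-maximum-distance} (independent of the starting configuration), finiteness of $B$ comes from the wqo finite basis property, and the closed assumption is what prevents $\card{X_{i+m}}$ from drifting during the path so that the per-step bounds can be composed cleanly. None of the steps requires a nontrivial calculation; the work is purely in lining up the three structural properties in the right order.
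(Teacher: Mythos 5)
Your proposal is correct and follows essentially the same route as the paper's proof: both use Lemma~\ref{lemma-maximum-distance} to fix $\ell$, take a finite basis of $\Pred^{\le\ell}(T)$, chase a witness path from a basis element below $X_i$ while chaining the polynomial transition probability bounds, and finish by maximizing the exponent over the finite basis. Your explicit remark that closedness keeps $\card{X_{i+m}}$ equal to $\card{X_i}$ along the path is a point the paper leaves implicit, but it is the same argument.
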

\begin{proof}
    By Lemma~\ref{lemma-predecessors}, $\Pred^*(T)$ is upward-closed,
    and by Lemma~\ref{lemma-maximum-distance}, $\Pred^*(T) =
    \Pred^{≤\ell}(T)$ for some fixed $\ell$. 
    Choose some finite basis for $\Pred^{≤\ell}(T)$.

    Any $t$ in $\Pred^*(T)=\Pred^{≤\ell}(T)$ lies above some
    basis element $s$ of $\Pred^{≤\ell}(T)$. So there is an $\ell_s
    ≤ \ell$ and sequence of transitions
    $s = s_0 → s_1 → \dots → s_{\ell_s} ∈ T$ that 
    puts $s$ in $\Pred^{≤\ell}(T)$.
    This maps to a family of partial executions
    $X_i → X_{i+1} → \dots → X_{i+\ell_s} ∈ T$ where each $X_{i+j} ≥
    s_j$. Applying polynomial transition probabilities, there are
    fixed constants $k_j$ such that
    \begin{align*}
        \ProbCond{X_{i+\ell_s}≥s_{\ell_s}}{X_i≥s}
        &≥ ∏_{j=1}^{\ell_s} \ProbCond{X_{i+j}≥s_j}{X_{i+j-1}≥s_{j-1}}
        \\
        &≥ ∏_{j=1}^{\ell_s} Ω\parens*{\card{X_i}^{-k_j}}
        \\
        &= Ω\parens*{\card{X_i}^{-k_s}},
        \intertext{where}
        k_s &= ∑_{j=1}^{\ell_s} k_j.
    \end{align*}
    Let $k=\max_s k_s$, where $s$ ranges over the finite basis of
    $\Pred^{≤\ell}(T)$. Then
    \begin{align*}
        \ProbCond{∃j ≤ \ell: X_{i+j} ∈ T}{X_i ∈ \Pred^*(T)}
        &≥ \min_s \ProbCond{X_{i+\ell_s} ∈ T}{X_i≥s}
        \\
        &≥ Ω\parens*{\card{X_i}^{-k}}.
    \end{align*}
\end{proof}

An immediate consequence of the lemma is the following,
which is our central result for closed SWSTSs:
\begin{theorem}
    \label{theorem-finish-or-fail}
    Let $X$ be an execution of a closed SWSTS $(S,≤,→,\card{⋅},\Pr)$.
    Let $T⊆S$ be upward-closed.
    Let $τ$ be the first time at which $X_τ ∈ T$ or $X_τ ∉
    \Pred^*(T)$.
    Then there is a constant $k$ such that $\Exp{τ} =
    O\parens*{\card{X_0}^k}$.
\end{theorem}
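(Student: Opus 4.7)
The plan is to combine Lemma~\ref{lemma-polynomial-path} with the closedness assumption to get a constant (in $i$) lower bound on the per-window termination probability, and then dominate $\tau$ by a geometric random variable.

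First I would observe that because the SWSTS is closed, $\card{X_i} = \card{X_0}$ for every $i$ in the execution. Let $\ell$ and $k$ be the constants provided by Lemma~\ref{lemma-polynomial-path} applied to $T$, and set $p = \Omega\parens*{\card{X_0}^{-k}}$; the point is that $p$ does not depend on $i$ or on the particular state $X_i$ reached, only on the starting weight.

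Next I would argue a one-window bound: conditioned on $\tau > i$, we have $X_i \in \Pred^*(T) \setminus T$. By Lemma~\ref{lemma-polynomial-path} (via the Markov property), with probability at least $p$ some $X_{i+j}$ with $j \le \ell$ lies in $T$, which by the definition of $\tau$ forces $\tau \le i+\ell$. Therefore
\begin{equation*}
    \ProbCond{\tau \le i+\ell}{\tau > i} \ge p,
\end{equation*}
or equivalently $\ProbCond{\tau > i+\ell}{\tau > i} \le 1-p$. (I would note in passing that the three possible outcomes within a window — hitting $T$, leaving $\Pred^*(T)$, or remaining in $\Pred^*(T)\setminus T$ — are all compatible with the argument, since the first two immediately set $\tau$ and the third just lets us iterate.)

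Finally I would chain these bounds: by induction, $\Prob{\tau > m\ell} \le (1-p)^m$, so $\tau/\ell$ is stochastically dominated by a geometric random variable with success probability $p$, giving
\begin{equation*}
    \Exp{\tau} \le \ell \cdot \frac{1}{p} = O\parens*{\card{X_0}^k}.
\end{equation*}
The main obstacle is really just pinning down the constant $p$: the argument only works because closedness collapses the $\card{X_i}$-dependence of Lemma~\ref{lemma-polynomial-path} into a single value depending on $\card{X_0}$. Without closedness, $\card{X_i}$ could grow and the per-window probability could shrink fast enough to make $\Exp{\tau}$ unbounded, which is why the hypothesis is used here.
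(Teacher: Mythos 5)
Your proposal is correct and follows essentially the same route as the paper's proof: apply Lemma~\ref{lemma-polynomial-path}, use closedness to replace $\card{X_i}$ by $\card{X_0}$, and bound $\tau$ by a geometric number of length-$\ell$ windows, each succeeding (by entering $T$ or having already left $\Pred^*(T)$) with probability $\Omega\parens*{\card{X_0}^{-k}}$. Your write-up is in fact somewhat more careful than the paper's, which states the same argument tersely.
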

\begin{proof}
    From Theorem~\ref{theorem-finish-or-fail}, there are constants $\ell$
    and $k$ such that if $X_i ∈ \Pred^*(T)$, there is an
    $Ω\parens*{\card{X_i}^{-k}} = Ω\parens*{\card{X_0}^{-k}}$ chance that $X_{i+j} ∈ T$ for some $j
    ≤ \ell$. So every $\ell$ steps, $X_i$ either enters $T$ with
    probability $Ω\parens*{\card{X_0}^{-k}}$ or leaves $\Pred^*(T)$
    forever. In either case the expected waiting time for this event is
    bounded by
    $O\parens*{\ell \card{X_0}^k}
    = O\parens*{\card{X_0}^k}$.
\end{proof}

\subsection{Failure of phase clocks}

In the context of population protocols, a 
\concept{phase clock}~\cite{AngluinAE2008fast}
is a mechanism that allows a protocol
to wait for some number of steps with high probability, typically to
allow some other stochastic process (like an epidemic) to finish.
Phase clocks in the
literature~\cite{AngluinAE2008fast,KosowskiU2018,AlistarhAG2018,GasieniecS2018,GasieniecS2020,GasieniecSS2021}
repeatedly count off intervals of expected $Θ(n \log n)$ or some
other number of steps, where the start
of each interval is marked by a configuration containing a leader
agent (or sometimes multiple leader agents) entering some class of
states that we will call \concept{tick}, after which the leader reverts to
some other class of non-tick states we will call \concept{tock}. For a
phase clock with this structure, Lemma~\ref{lemma-polynomial-path}
guarantees that it will fail by ticking too fast within a polynomial
number of ticks.

The argument is simple: The class of configurations $T$ containing at
least one tick state and that class of configurations $T'$ containing
at least one tock state are both upward closed. So
Lemma~\ref{lemma-polynomial-path} says that from any configuration in
$T'$ there is a polynomial probability of reaching a configuration in
$T$ in a constant number of steps, and a second application of the
lemma gives a similarly polynomial probability of returning to a
configuration in $T$ in a constant number of steps. The only way to
avoid this event over repeated ticks of the clock is through failure:
reaching a configuration from which $T$ and $T'$ are no longer
reachable.

We can state this outcome formally as:
\begin{theorem}
    \label{theorem-phase-clock}
    Define a phase clock protocol in an SWSTS as a protocol repeatedly
    alternates between configurations in an upward-closed set $T$ and
    an upward-closed set $T'$, and define a round of the phase clock
    as a sequence of transitions $T\stackrel{*}{→}T'\stackrel{*}{→}T$.
    Then for a phase clock in any closed SWSTS, the expected time until 
    it performs a round in $O(1)$ steps or reaches a configuration
    that can no longer reach $T$ is polynomial in the size of
    its initial configuration.
\end{theorem}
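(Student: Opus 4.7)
The plan is to combine two applications of Lemma~\ref{lemma-polynomial-path}---one targeting $T'$ from a $T$-configuration, the other targeting $T$ from a $T'$-configuration---with Theorem~\ref{theorem-finish-or-fail} to bound the wait between attempts. Since $T$ and $T'$ are both upward-closed, Lemma~\ref{lemma-polynomial-path} provides constants $\ell_1,k_1$ and $\ell_2,k_2$ such that any configuration in $\Pred^*(T')$, respectively $\Pred^*(T)$, reaches $T'$, respectively $T$, within $\ell_1$, respectively $\ell_2$, further steps with probability at least $\Omega\parens*{\card{X_0}^{-k_1}}$, respectively $\Omega\parens*{\card{X_0}^{-k_2}}$.

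Fix any time $i$ with $X_i \in T$. Chaining the two applications, one of two terminating events occurs in the next $\ell_1+\ell_2 = O(1)$ steps with probability at least $\Omega\parens*{\card{X_0}^{-(k_1+k_2)}}$: either a full round $T \stackrel{*}{\to} T' \stackrel{*}{\to} T$ completes (witnessing the ``ticks too fast'' failure), or an intermediate $T'$-configuration is visited that is not in $\Pred^*(T)$ (which is itself a configuration from which $T$ is unreachable). If neither event occurs, I apply Theorem~\ref{theorem-finish-or-fail} with target $T$ to see that $X$ either re-enters $T$, opening a fresh trial, or reaches a can't-reach-$T$ configuration in polynomial expected time.

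A standard geometric-trials argument then delivers the theorem: the expected number of trials before one succeeds is $O\parens*{\card{X_0}^{k_1+k_2}}$, and each inter-trial interval has polynomial expected length, so the total expected time until the first $O(1)$-step round or the first can't-reach-$T$ configuration is polynomial in $\card{X_0}$. The main subtlety I foresee is that $X_i \in T$ does not by itself force $X_i \in \Pred^*(T')$; to exclude pathological stalling in $T$ with $T'$ unreachable, I would invoke the phase clock's stipulated alternation between $T$ and $T'$, which precludes an infinite stay in either set without advancing to the other, and combine this with Theorem~\ref{theorem-finish-or-fail} applied with target $T'$ to reduce to the case where $T'$ is reachable or a can't-reach-$T$ failure has already occurred.
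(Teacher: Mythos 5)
Your proposal is correct and follows essentially the same route as the paper: two chained applications of Lemma~\ref{lemma-polynomial-path} to get a polynomial probability of a constant-length round from each visit to $T$, a fallback to Theorem~\ref{theorem-finish-or-fail} to return to $T$ (or fail) in polynomial expected time when the round does not materialize, and a geometric-trials argument to conclude. Your explicit attention to the case $X_i \in T$ but $X_i \notin \Pred^*(T')$ is in fact slightly more careful than the paper's own treatment, which passes over that point.
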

\begin{proof}
    Starting in $X_0$, run the phase clock until it reaches $T$ or a
    configuration that can't reach $T$; this
    takes time polynomial in $\card{X_0}$ by
    Theorem~\ref{theorem-finish-or-fail}.

    From any configuration in $T$, applying
    Lemma~\ref{lemma-polynomial-path} gives a polynomial probability
    of reaching $T'$ and then $T$ again in a constant number of steps.
    If this does not occur, apply Theorem~\ref{theorem-finish-or-fail}
    twice to either carry out a round of the phase clock or reach a
    state that can no longer reach $T$, in polynomial time in either
    case.

    If we reach $T$ again, this gives another polynomial chance of a
    constant-time round. So after an expected polynomial number of
    iterations, each of which takes polynomial expected time, we
    either get a constant-time round or reach a configuration that
    can't reach $T$.
\end{proof}

\subsection{Polynomial-time termination of SWSTS computations}

A further example of where Theorem~\ref{theorem-finish-or-fail} is useful is
limiting the expected running time of closed SWSTS protocols where
termination is signaled by a special leader agent entering one of
finitely many terminal states $L_{\textrm{decision}}$. Since any configuration that
contains such a state is terminal, the class $T$ of terminal
configurations is upward-closed. This implies that either a terminal
configuration is reached or all terminal configurations become
unreachable in polynomial expected time. This puts an inherent limit
on the computational power of, for example, leader-based population
protocols~\cite{AngluinAE2008fast}, showing that they cannot run for
more than a polynomial expected number of steps no matter how cleverly
they are constructed.

We will give numerous examples of SWSTSs in
§\ref{section-examples}. After developing some machinery to describe
computations in these models in §\ref{section-computation}, we will
apply Theorem~\ref{theorem-finish-or-fail} to show that any language
computed in most of these models with bounded two-sided error is in the class
$\classBPP$ of languages computed with bounded two-sided error by
probabilistic polynomial-time Turing machines. This is true even
without imposing a polynomial-time restriction on the SWSTS
computations; instead, Theorem~\ref{theorem-finish-or-fail}
supplies the polynomial-time
restriction for us. However, for this to work, we need to have
computations that terminate explicitly, by entering into a decision
configuration contained in some upward-closed set.

\subsection{Explicit termination vs implicit convergence}

Not all models make the assumption of explicit termination.
\concept{Stable computation} as defined by
Angluin~\etal~\cite{AngluinADFP2006} requires eventual permanent
convergence to configurations in which all agents show the correct
output. This does not require that agents know that they have the
correct output, and indeed we can show that the polynomial-time
guarantee of Theorem~\ref{theorem-finish-or-fail} 
does not apply to convergence, since it is
possible to construct simple population protocols that converge only
after exponential time.

For example, consider an epidemic process with agents that are
\concept{susceptible} ($S$) or \concept{infected} ($I$), supplemented
by a single \concept{doctor} agent ($D$) that cures any infected agent
it meets. Formally, this is a population protocol
with non-trivial transitions $IS → II$ and $ID
→ SD$. Assuming we start with a single doctor, the unique reachable stable
configuration for this protocol consists of the doctor in state $D$
and the other $n-1$ agents all in state $S$.

However, with $k$ out of $n$ agents in state $I$, the probability that
a new agent is infected on the next step is $\frac{k(n-k-1)}{n(n-1)}$,
while the probability that an agent ceases to be infected is only
$\frac{k}{n(n-1)}$. This makes the process a biased random walk with a
bias strongly away from $0$ for most values of $k$. So it will take
a while to reach $0$:
\begin{theorem}
    \label{theorem-doctor-convergence}
    Starting from a configuration with $n-1$ agents in state $I$ and
    one agent in state $D$, the expected time for the $IS→II,ID→SD$ 
    protocol to converge to its
    stable configuration is exponentially small as a function of $n$.
\end{theorem}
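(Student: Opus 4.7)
The plan is to model the number of infected agents $X_t$ as a birth-death Markov chain on $\{0, 1, \ldots, n-1\}$, where from state $k$ the probability of the up-step $k \to k+1$ (an $IS \to II$ interaction) is $p_k = k(n-1-k)/(n(n-1))$, the probability of the down-step $k \to k-1$ (an $ID \to SD$ interaction) is $q_k = k/(n(n-1))$, and the rest is a self-loop. The key ratio $q_k/p_k = 1/(n-1-k)$ is tiny except near the reflecting upper endpoint $k = n-1$ (where $p_{n-1} = 0$), so the chain is very strongly biased \emph{away} from the absorbing target state $0$.

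Next I would apply the standard gambler's-ruin formula for birth-death chains. Writing $\gamma_i = \prod_{\ell=1}^{i} q_\ell/p_\ell = \prod_{\ell=1}^{i} 1/(n-1-\ell)$ and $\phi(k) = \sum_{i=0}^{k-1} \gamma_i$, the probability that a single excursion away from $n-1$ reaches $0$ before returning to $n-1$ is $\gamma_{n-2}/\phi(n-1)$. A telescoping computation yields $\gamma_{n-2} = 1/(n-2)!$, while $\phi(n-1) \ge \gamma_0 = 1$, so this per-excursion escape probability is at most $1/(n-2)!$.

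By the strong Markov property, the expected number of excursions from $n-1$ before absorption at $0$ is at least $(n-2)!$, and each excursion takes at least one step, giving $\E[T_0 \mid X_0 = n-1] \ge (n-2)! = 2^{\Omega(n \log n)}$, comfortably exponential in $n$. (I read ``exponentially small'' in the theorem statement as a slip for ``exponentially large'', consistent with the preceding heuristic discussion of the biased random walk.)

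The main obstacle is the bookkeeping for the gambler's-ruin setup on this non-homogeneous chain with a reflecting upper barrier; the calculation is elementary once one notices the clean telescoping product for $\gamma_{n-2}$. A minor subtlety is that self-loops and the slow ($\Theta(n)$-step) escape time from $n-1$ do not alter any hitting probabilities, but would need to be accounted for (harmlessly) if one wanted a sharper constant in the final time bound.
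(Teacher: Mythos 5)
Your proposal is correct, and it reaches the stated conclusion (which, as you rightly note, should read ``exponentially large'') by a genuinely different route than the paper. The paper avoids dealing with the non-homogeneous chain directly: it couples $X_t$ with a dominated process $Y_t \le X_t$ that behaves as an intermittent random walk with \emph{fixed} bias $p = 1 - \frac{2}{n-1}$, a reflecting barrier at $b = \lfloor (n-1)/2 \rfloor$ and an absorbing barrier at $0$, and then invokes Feller's classical ruin formula for homogeneous walks to get a per-excursion escape probability of $\Theta(n^{-b}) = n^{-\Theta(n)}$. You instead apply the general birth--death hitting-probability formula to the exact chain, exploiting the telescoping product $\gamma_{n-2} = \prod_{\ell=1}^{n-2} \frac{1}{n-1-\ell} = \frac{1}{(n-2)!}$ to bound the per-excursion escape probability by $\frac{1}{(n-2)!}$ outright. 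Your calculation is arguably cleaner (no coupling to verify, no case analysis around the barrier $b$) and yields a slightly sharper constant in the exponent ($(n-2)! = n^{n - o(n)}$ versus the paper's $n^{\Theta(n)}$ with exponent roughly $n/2$); the paper's coupling buys a reduction to a textbook formula at the cost of the domination argument. Both correctly observe that self-loops and conditioning on $X_{t+1} \ne X_t$ can only increase the hitting time and so do not endanger the lower bound. The one detail worth spelling out if you wrote this up fully is the justification of the excursion decomposition itself (that absorption at $0$ occurs during the first successful excursion, that successive excursions from $n-1$ are i.i.d.\ by the strong Markov property, and that $p_{n-1} = 0$ forces every excursion to begin at $n-2$), but all of these are routine.
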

\begin{proof}
    Let $X_t$ be the number of infected agents after $t$ steps.
    Conditioning on $X_{t+1} ≠ X_t$, we have
    \begin{align*}
        \ProbCond{X_{t+1} = X_t+1}{X_{t+1}≠X_t}
        &= \frac{n-k-1}{n-k}
        \intertext{and}
        \ProbCond{X_{t+1} = X_t-1}{X_{t+1}≠X_t}
        &= \frac{1}{n-k}.
    \end{align*}

    We will bound the convergence time for $X_t$ by considering a
    simpler coupled process $Y_t≤X_t$ that acts as an intermittent random
    walk with fixed bias, a reflecting barrier at
    $b=\floor{\frac{n-1}{2}}$, an absorbing barrier at $0$, and an
    initial state $Y_0=b≤X_0$.

    The coupling between $Y_t$ and $X_t$ will have $Y_{t+1}≠Y_t$ only
    if $X_{t+1}≠X_t$. When $Y$ does move, we will let
    $\ProbCond{Y_{t+1}=Y_t+1}{Y_{t+1}∉\Set{0,Y_t,b}} = p =
    1-\frac{2}{n-1}$. Because of the reflecting barrier, $Y$ can only
    rise if $Y_t<b$. If $X_t≥b+1$, choosing $Y_{t+1}$ and $X_{t+1}$
    independently preserves $Y_{t+1}≤X_{t+1}$. If $X_t≤b$, then the
    probability that $X_{t+1}$ drops is at most $\frac{1}{n-b} ≤
    \frac{2}{n-1}=1-p$. So in this case we can arrange for $Y_{t+1}$
    to drop whenever $X_{t+1}$ does, again preserving
    $Y_{t+1}≤X_{t+1}$. 

    Now let us consider the expected time for $Y_t$ to reach $0$.
    Whenever $Y_t=b$, it always takes one step to get $Y_{t+1}=b-1$.
    From this point on we have a standard ruin problem where the
    probability that $Y$ next reaches $b$ before reaching $0$ is
    exactly 
    \begin{align*}
        \frac{(q/p)^{b-1} - (q/p)^{b}}{1-(q/p)^b}
    \end{align*}
    (see, for example, \cite[XIV.2.4]{Feller1968}).
    In this case,
    \begin{align*}
        (q/p) &= \frac{2/(n-1)}{1-2/(n-1)} = Θ(1/n),
        \intertext{which gives a probability of reaching $0$ first of}
        \frac{Θ(n^{-b-1}) - Θ(n^{-b})}{1-Θ(n^{-b}}
        &= Θ(n^{-b}) = Θ\parens*{n^{-Θ(n)}}.
    \end{align*}
    Since this event is exponentially improbable, the expected number
    of attempts to reach $0$ from $b-1$ is exponential. Each such
    attempt takes at least two steps, so this gives an exponential
    expected number of steps.
\end{proof}

This demonstrates that convergence can take much longer than explicit
termination, which has unfortunate consequences for detecting
convergence. If we attempt to extend the protocol above to include an
explicit termination state at some agent, the gap between the
polynomial time to reach a marked terminal configuration required by
Theorem~\ref{theorem-finish-or-fail} and the exponential time to
actually converge given by Theorem~\ref{theorem-doctor-convergence}
implies that the convergence detector will go off early in
all but an exponentially improbable fraction of executions.

\section{Examples of SWSTSs}
\label{section-examples}

When given a model of a distributed system with randomized scheduling,
we can expect to be provided with $S$ and a randomized transition rule
$→$. Turning this into an SWSTS 
requires finding a wqo $≤$ compatible
with $→$ and a weight
function $\card{⋅}$ on $S$ that gives polynomial transition
probabilities. 

Assuming at least one transition exists in any configuration,
this can be done trivially using a \concept{complete} quasi-order where
$s≤t$ for all $s$ and $t$. This is a well-quasi-order because any
infinite sequence $x_0, x_1, \dots$ has $x_0≤x1$.
For a complete wqo, compatibility with $→$ follows because for any $s→s'$
and $s≤t$, and any $t'$ with $t→t'$, $s'≤t'$.
Polynomial
transition probabilities are immediate because for any $s→s'$,
$\ProbCond{X_{i+1}≥s'}{X_i≥s} = 1$, since all $X_{i+1}≥s'$. 
We will never use this particular wqo because it makes
Theorem~\ref{theorem-finish-or-fail} trivial: every upward-closed target set
$T$ is either the empty set or the set of all configurations.
But this does suggest that we may 
need to exercise some care to make sure that the subconfiguration
relation $≤$ reflects
interesting properties of the system under consideration in addition
to satisfying the technical requirements of an SWSTS.

In this section, we show that many known models in the
literature can be described as SWSTSs with reasonably natural
subconfiguration relations. We will also give some examples
of more exotic SWSTSs that have not previously been considered.

\subsection{Population protocols}
\label{section-population-protocols}

We have been using population protocols~\cite{AngluinADFP2006} as our standard example up
until now, so let us give a brief recapitulation to illustrate the
approach:
\begin{itemize}
    \item 
        A configuration $s$ with
        $n$ agents is a vector of agent states in $Q^n$.
        Its weight function $\card{s}$ is just $n$.
        Transitions are closed because they do not change the number
        of agents.
    \item The subconfiguration relation $s≤t$ is an embedding
        $f:[\card{s}]→[\card{t}]$ mapping the indices of $s$ to the
        indices of $t$, where $f$ is strictly increasing and $s_i =
        t_{f(i)}$ for all $i$. This is a wqo by Higman's Lemma, using the
        trivial wqo on $Q$ with $q ≤ q'$ if and only if $q=q'$.
    \item Compatibility between $≤$ and $→$ follows because whenever $s→s'$
        updates the states of $s_i$ and $s_j$ to new values
        $(s'_i,s'_j) = δ(s_i,s_j)$, there is a matching transition
        $t→t'$ for any $t≥s$ that applies the same update to $t_{f(i)}
        = s_i$ and $t_{f(j)} = s_j$, giving $t'≥s'$ using the same
        embedding.
    \item Polynomial transition probabilities follow from each possible
        transition $t→t'$ occurring with probability $\frac{1}{n(n-1)} =
        Ω\parens*{\card{t}^{-2}}$, which gives a lower bound for any
        $s→s'$ on $\ProbCond{X_{i+1}≥s'}{X_i≥s}$ since at least one
        possible transition $X_i→X_{i+1}≥s'$ exists and this occurs
        with probability 
        $Ω\parens*{\card{X_i}^{-2}}$
\end{itemize}

It follows that population protocols with randomized scheduling give a
class of
closed SWSTSs. We
will refer to this class as \POP.

\subsection{Population protocols with edge oracles}

Two recent models~\cite{BlondinL2023,GanczorzGJS2023} augment the
population with additional structural information, and it is not hard
to show that these augmented models also give close SWSTSs. We will describe
these specific models first, then describe an approach that
encompasses both using an \concept{edge oracle} that carries this
additional information.

\subsubsection{Unordered and ordered models}

Blondin and Laceur \cite{BlondinL2023} define a variant of population
protocols with restricted infinite states called \concept{population
protocols with unordered data}. In this model, the state of each agent
consists of a mutable state from a finite state space $Q$ that is directly accessible to the
transition relation, and an immutable \concept{datum} or
\concept{color} from a fixed and possibly infinite
domain $\mathbb{D}$ that can only be tested for equality.
The transition relation now takes the form $δ ∈ Q^2 × \Set{=,≠} →
Q^2$, where the third input reports whether two agents have the same
color or not. We will abbreviate this model as $\POP^=$.

Ga\'{n}czorz~\etal~\cite{GanczorzGJS2023} define a similar
\concept{comparison model} where each agent has a unique hidden key; but now,
instead of testing keys for equality, keys are compared 
using a total order.
The transition relation in this model now has the form $δ ∈ Q^2 × \Set{<,>} →
Q^2$.

Because the comparison model orders all of the
agents, the input to the protocol can be treated as a word
instead of just a multiset of states. This means we can
talk about the language accepted by such a  protocol.

For symmetry
with unordered population protocols, we will refer to protocols in the comparison
model as \concept{ordered population protocols} and
abbreviate the model as $\POP^<$.

\subsubsection{Representation by edge oracles}
\label{section-representation-by-edge-oracles}

One way to generalize these models is that any interaction between
agents $x$ and $y$ is controlled by a transition rule of the form
$δ(q_x,q_y,A(x,y))=\Tuple{q'_x,q'_y}$ where $A$ is an \concept{edge oracle}
that provides information about the ordered pair $\Tuple{x,y}$ that
does not depend on the agents' states $q_x$ and $q_y$. Such an edge
oracle model is essentially a static version of the \concept{mediated
population protocols} of Michail~\etal~\cite{MichailCS2011}, which
differs by allowing the label on each edge to be updated by the
protocol. We will write $\POP$ augmented by a particular edge oracle
$A$ as $\POP^A$; the same superscript notation will later be used for other
models augmented by appropriate edge oracles.

Both $\POP^=$ and $\POP^<$ can be represented using an edge oracle,
since the oracle can provide equality testing on the agents' hidden
data or ordering information on the agents as appropriate.

\subsubsection{Orientation}
\label{section-orientation}

We have adopted the convention for $\POP$ that the initiator of an
interaction provides the first argument to an edge oracle. 

This choice is somewhat arbitrary. Most of the oracles we consider
will be \concept{reversible} in the sense that there is a function $f$
that computes $A(y,x) = f(A(x,y))$, which allows either of two agents
in an interaction to appear as the first argument to the oracle
without changing the power of the model, as long as the choice is
consistent from one interaction to another. But to be safe, we will try to adopt a
consistent ordering of the arguments to the edge oracle that applies
in other models
as well: for models with two-way communication, we will assume that one of
the agents is always marked as an initiator and appears as the first
argument to the edge oracle, while for models with one-way
communication, we will treat the agent that transmits information as
the first argument to the edge oracle.

\subsubsection{Successor oracle}
\label{section-successor}

\newData{\Successor}{successor}

Unfortunately,
population protocols will not necessarily give a useful SWSTS for an arbitrary
edge oracle. For example, suppose that the \Successor oracle for a
totally-ordered population returns true precisely when the responder
agent is the immediate successor of the initiator. This is sufficient
to implement a Turing machine simulation where each agent holds one
tape cell without error, 
since moving the tape head simply requires
waiting for the oracle to identify the immediate predecessor or
successor of the head's current position. Such an implementation can wait
exponential time by counting, violating
Theorem~\ref{theorem-finish-or-fail}. This does not necessarily mean that
there is no wqo $≤$ compatible with $→$, since as noted earlier
a complete wqo is compatible with any transition relation, but it does mean that 
any wqo $≤$ compatible with $→$ has a configuration $s$ that includes a
terminating state and a configuration $t≥s$ that does not, meaning
that termination can be lost by moving to a larger configuration.

Similar results apply to any model that can simulate a Turing machine
with sufficiently many tape cells without error. This includes
population protocols on
bounded-degree interaction graphs~\cite{AngluinACFJP2005},
community protocols~\cite{GuerraouiR2009},
mediated population protocols~\cite{MichailCS2011}, and
spatial population protocols~\cite{GasieniecKLPSS2025}, since all of
these models can, in some starting configurations,
simulate a linear Turing-machine tape by exploiting or constructing some sort of
adjacency relation between agents.

\subsubsection{Total quasi-order oracle}

However, in the case of $\POP^=$ and $\POP^<$, there \emph{are} embeddings
that give wqos.
We can prove this by defining an edge oracle model
that includes both of these models as special cases.
In this model, which we will call \concept{totally quasi-ordered
population protocols}, each agent has both a mutable state and a
hidden color from a totally ordered set $\mathbb{D}$, and the
edge oracle reports the relative order of the agents' colors, giving a
transition relation of
the form $δ ∈ Q^2 × \Set{<,=,>} → Q^2$. A
configuration with $n$ agents is a vector in $(Q×\mathbb{D})^n$, and
since the edge oracle depends only on the order of colors and not
agent positions,
for convenience we can require that the agents in any configuration are
sorted by increasing color. Since two agents may have the same color,
the total order on $\mathbb{D}$ gives a total quasi-order on agents,
hence the name. We will write the quasi-order on agents induced by the
total order on colors as $≾$, and refer to the model as $\POP^≾$.

The $\POP^=$ and $\POP^<$ models are both special cases of $\POP^≾$.
For $\POP^=$, assume a transition relation that does not distinguish
between $<$ and $>$, treating both as $≠$.
For $\POP^<$, assume that each agent starts with a unique color.

We now argue that for any protocol in $\POP^≾$, there is a wqo
$≤$ on configurations that is consistent with the transition relation
$→$ for that protocol. This will show that $\POP^≾$, $\POP^=$, and
$\POP^<$ all yield WSTSs. We start with the following lemma:

\begin{lemma}
    \label{lemma-quasi-ordered-embedding}
    Let a \concept{quasi-ordered sequence} over a set $Q$
    be a sequence $s$ of
    elements of $Q$ together with a quasi-order $≾$ on indices of $s$
    such that $i<j$ implies $i≾j$. Given two quasi-ordered sequences
    $s$ and $t$ over a well-quasi-ordered set $Q$,
    let $s≤t$ when there is a strictly increasing embedding $f$ from $s$ to
    $t$ such that $i≾j$ if and only if $f(i)≾f(j)$ and, for all $i$,
    $s_i ≤ t_{f(i)}$. Then $≤$ is a well-quasi-order.
\end{lemma}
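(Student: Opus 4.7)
My plan is to reduce the claim to two applications of Higman's Lemma, by first observing that the quasi-order $\lesssim$ on indices has a very constrained shape. The condition that $i<j$ implies $i\lesssim j$ forces the $\sim$-equivalence classes of $\lesssim$ (where $i\sim j$ means $i\lesssim j$ and $j\lesssim i$) to be contiguous intervals of indices: if $i<k$ are $\sim$-equivalent and $i<j<k$, then $i\lesssim j\lesssim k\lesssim i$ gives $j\sim i\sim k$ by transitivity. So a quasi-ordered sequence over $Q$ is entirely determined by its underlying sequence together with a partition of the indices into consecutive non-empty blocks, i.e., it can be encoded as an element of $(Q^+)^*$, the finite sequences of non-empty finite sequences over $Q$.

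Next I would check that the embedding relation defined in the lemma corresponds exactly to Higman embedding on $(Q^+)^*$, where the base order on $Q^+$ is itself Higman embedding with respect to $(Q,\leq)$. Given a strictly increasing $f$ witnessing $s\leq t$, the biconditional $i\lesssim j \iff f(i)\lesssim f(j)$ does the key work: the forward direction forces $f$ to send indices in the same block of $s$ into indices in the same block of $t$, so $f$ induces a well-defined map $g$ on blocks; the reverse direction forces $g$ to be injective, hence strictly increasing once combined with the strict monotonicity of $f$. The restriction of $f$ to an individual block of $s$ is then a strictly increasing map into block $g(m)$ of $t$ satisfying $s_i \leq t_{f(i)}$, which is exactly a Higman embedding of that block into its image block. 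Conversely, any block-level Higman embedding plus within-block Higman embeddings gives back a valid $f$.

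Finally, I would invoke Higman's Lemma twice. Since $(Q,\leq)$ is a wqo, Higman's Lemma gives that $(Q^*,\leq_H)$ is a wqo, and $(Q^+,\leq_H)$ is a wqo as a subset. Applying Higman's Lemma again to the wqo $(Q^+,\leq_H)$ gives that $((Q^+)^*,\leq_H)$ is a wqo, and by the encoding and the correspondence of embeddings established in the previous step, this is exactly $\leq$ on quasi-ordered sequences over $Q$.

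The main obstacle is the correspondence step: one has to be careful that the biconditional in the index-order condition really is equivalent to both preserving and reflecting the block structure, and that the strict monotonicity of $f$ together with injectivity of the induced block map gives a genuinely Higman-style embedding at the outer level. The contiguity observation about $\sim$-classes is routine but is what makes the two-level encoding possible; once it is in place, the rest is essentially bookkeeping on top of two standard applications of Higman's Lemma.
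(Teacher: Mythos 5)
Your proposal is correct and follows essentially the same route as the paper's proof: observe that the $\sim$-classes are contiguous intervals, encode a quasi-ordered sequence as a sequence of blocks, and apply Higman's Lemma once within blocks and once across blocks. You are somewhat more explicit than the paper about the two-way correspondence between embeddings $f$ and pairs of (outer, inner) Higman embeddings — in particular the ``conversely'' direction needed to transfer wqo-ness back to $\leq$ — but this is a difference in level of detail, not of approach.
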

\begin{proof}
    Write $i≈j$ for the equivalence relation that holds when $i≾j$ and
    $j≾i$, and $[i]$ for the equivalence class of $i$ under this relation.
    Observe that in any quasi-ordered sequence the elements of each
    equivalence class are consecutive. We can thus reindex $s$ as
    $s_{ci}$ where $c$ is an equivalence class and $i$ is the position
    of an element within that equivalence class.

    Write $\hat{f}(c)$ for the
    equivalence class to which $f$ maps each element of $c$. For each
    $c$, we have that $f|_c$ is an increasing embedding from $s_c$ to
    $t_{\hat{f}(c)}$
    such that $s_{ci} ≤ t_{f(c)i}$ for all $i$. By Higman's Lemma,
    such embeddings form a wqo and we have $s_c ≤ t_{f(c)}$ with
    respect to this wqo. But then $\hat{f}$ is an increasing embedding
    of the equivalence classes in $s$ to the equivalence classes in
    $t$ with $s_c ≤ t_{\hat{f}(c)}$ for all $c$, which again satisfies the
    conditions of Higman's Lemma.
\end{proof}

For $\POP^≾$, this allows us to show:
\begin{lemma}
    \label{lemma-pop-tqo}
    Every protocol in $\POP^≾$ yields a well-structured transition
    system.
\end{lemma}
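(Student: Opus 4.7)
The plan is to view each configuration of a $\POP^≾$ protocol as a quasi-ordered sequence in the sense of Lemma~\ref{lemma-quasi-ordered-embedding} and then read off both the wqo and compatibility from that lemma plus a routine check on the transition relation. A configuration with $n$ agents, which is a vector in $(Q×\mathbb{D})^n$ sorted by color, naturally decomposes into two pieces of data: the sequence $q∈Q^n$ of mutable states, and a quasi-order $≾$ on $[n]$ where $i≾j$ iff the color of agent $i$ is at most the color of agent $j$. Since the agents are sorted by color, $i<j$ implies $i≾j$, so this is indeed a quasi-ordered sequence over $Q$ as required by Lemma~\ref{lemma-quasi-ordered-embedding}.

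Using the trivial wqo on the finite set $Q$ (equality), Lemma~\ref{lemma-quasi-ordered-embedding} immediately gives a wqo $≤$ on configurations: $s≤t$ when there is a strictly increasing map $f$ from $[\card{s}]$ to $[\card{t}]$ that reflects and preserves the color quasi-order ($i≾j$ iff $f(i)≾f(j)$) and satisfies $s_i = t_{f(i)}$ on mutable states.

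To verify compatibility with the transition relation, suppose $s→s'$ is obtained by applying $δ(q_i,q_j,r)=(q'_i,q'_j)$ to agents $i$ and $j$, where $r∈\Set{<,=,>}$ is the relation between their colors, and suppose $s≤t$ via the embedding $f$. Because $f$ preserves $≾$ in both directions, it preserves the induced equivalence (the $=$ case) and, consequently, strict inequality in each direction (the $<$ and $>$ cases). Thus the color relation between $f(i)$ and $f(j)$ in $t$ is again $r$, so the same transition rule fires on $(t_{f(i)},t_{f(j)})$; letting $t'$ be the result, we have $t_{f(i)}'=s'_i$, $t_{f(j)}'=s'_j$, and all other coordinates of $t$ unchanged. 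The same embedding $f$ therefore witnesses $s'≤t'$.

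I do not expect any real obstacle: the heavy lifting is packed into Lemma~\ref{lemma-quasi-ordered-embedding}, and the only thing to be careful about is matching the trichotomy $\Set{<,=,>}$ used by the edge oracle to the two-directional preservation of $≾$ supplied by the embedding. Once that is observed, compatibility is immediate and the lemma follows.
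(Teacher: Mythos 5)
Your proof is correct and follows essentially the same route as the paper: define $≤$ via the embeddings of Lemma~\ref{lemma-quasi-ordered-embedding} (with the trivial wqo on $Q$), then observe that such an embedding preserves the agents' states and relative color order, so the same transition fires at $f(i),f(j)$ and the same embedding witnesses $s'≤t'$. Your added detail on why two-directional preservation of $≾$ recovers the full trichotomy $\Set{<,=,>}$ is a welcome explicitness but not a different argument.
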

\begin{proof}
    Fix some protocol in $\POP^≾$, and let $s$ and $t$ be configurations
    of this protocol. Write $s≤t$ if there is an embedding $f$ that 
    satisfies the conditions of Lemma~\ref{lemma-quasi-ordered-embedding}.
    This gives a wqo on configurations.

    Compatibility of $→$ with $≤$ follows from essentially the same
    argument as for $\POP$: given a transition $s→s'$ where agents
    $i$ and $j$ interact, and an embedding $f:s→t$,
    then there is a transition $t→t'$ in which
    agents $f(i)$ and $f(j)$ interact, observing the same states and
    relative color order as in $s$, allowing them to choose same new states
    $t'_{f(i)} = s'_i$ and $t'_{f(j)}$. It follows that $t'_{f(k)} =
    s'_k$ for all $k$, so that $f$ also gives an embedding $s'≤t'$.
\end{proof}

This shows that $\POP^≾$ (and thus also $\POP^=$ and $\POP^<$) is a
class of WSTSs. Provided $δ$ is deterministic, assuming the same
uniform random scheduling rule as for $\POP$ gives polynomial
transition probabilities, making all of these SWSTSs as well.

\subsubsection{Ancestry oracle}

Kruskal's Tree Theorem~\cite{Kruskal1960} says that finite trees with
labels from a wqo form a wqo under homeomorphic embedding, in which
$s≤t$ if there is a mapping $f$ of the vertices of $s$ to the vertices
of $t$ that is (a) order-preserving on labels in the sense that
for all $i$, the label $s_i$ of $i$ and the label $t_{f(i)}$ of $f(i)$
satisfy $s_i ≤ t_{f(i)}$; and (b) structure-preserving on the edges in
the sense that each edge in $s$ is mapped to a disjoint path in $t$.

Nash-Williams~\cite{NashWilliams1963} gives a more convenient
characterization of homeomorphisms for rooted trees.
In Nash-Williams's definition,
if $s$ and $t$ are rooted trees, $f:s→t$ is a
homeomorphism from $s$ into $t$ if, for every node $i$ in $s$, the
children of $i$ are mapped to descendants of distinct children of
$f(i)$. This has the desirable property of preserving ancestry: $f(i)$
is an ancestor of $j$ if and only if $i$ is an ancestor of $j$.
It is also easy to see that it is equivalent to part (b) of Kruskal's
more general definition, since the chains of descent mapping through
distinct children give disjoint paths in $t$ for the edges leaving
some parent in $s$.

Let us write $i ⊑ j$ if $i$ is an ancestor of $j$. This information
can be represented locally by a unique hidden string at each agent,
with $i ⊑ j$ if $i$'s string is a prefix of $j$'s string. We can
define a model $\POP^{⊑}$ where a configuration is a rooted tree
labeled by states in some finite set $Q$ and the transition function
takes the form $δ:Q^2 × \Set{⊑,⊥,⊒} → Q^2$, with the third argument
indicating whether the initiator is an ancestor of, is unrelated to,
or is a descendant of the responder.

A configuration in $\POP^{⊑}$ is now a labeled rooted tree, and an
embedding $s≤t$ is a label-preserving homeomorphism between rooted
trees.  Configurations in $\POP^{⊑}$ form a wqo by Kruskal's Tree
Theorem (with, as usual, the trivial wqo on $Q$). Compatibility with
$→$ follows from $⊑$ being preserved by embedding. Polynomial
transition probabilities follow from all transitions occurring with
the same $\frac{1}{n(n-1)}$ probability as in all the other $\POP$
variants. This makes $\POP^{⊑}$ is a class of closed SWSTSs.

\subsection{Chemical reaction networks}
\label{section-chemical-reaction-networks}

The population protocol model is a special case of
the \concept{chemical reaction network} (CRN)
model~\cite{Gillespie1977}.
In addition to 
pairwise interactions of the form $A+B → C+D$, a chemical reaction
network may have
reactions that involve any positive number of reactants on the
left-hand side and products on the right-hand side.
The intuition is that in this model, agents correspond to
molecules, and steps correspond to chemical reactions.
A chemical reaction network is specified by giving a
list of permitted reactions, such as the reaction $H_2 + H_2 +
O_2 → H_2O + H_2O$ that converts two molecules of $H_2$ and
one molecule of $O_2$ into two molecules of $H_2 O$.
The types of molecules that may appear in a reaction are called
\concept{species} and correspond to the states of agents in other
models.
Each possible reaction occurs at a \concept{rate} that is some
constant times the number of possible ways to supply the reagents. The
probability of a particular reaction occurring at a given step is the
ratio between its rate and the sum of the rates of all possible reactions.

Because CRNs do not preserve the number of agents, 
we cannot use the $Q^n$ representation used for population protocols.
Instead, we
model a configuration of a CRN as a
finite-dimensional vector in $ℕ^Q$ of counts of molecules of each species.
The subconfiguration relation is componentwise ordering: $s≤t$ if and
only if $s_q ≤ t_q$ holds for all $q∈Q$.
A reaction is specified by a pair of vectors
$\Tuple{x,y}$,
where $s→s-x+y$ whenever $x≤s$.

Componentwise ordering is a wqo by Dickson's
Lemma. That it is compatible with $→$ can easily be shown by observing
that $s→s'=s-x+y$ and $s≤t$ implies $x≤s≤t$, giving a transition
$t→t'=t-x-y$ with $s'≤t'$.

The weight of a configuration is obtained by assigning a total atomic
number $w_q$ to each species $q∈Q$, and computing $\card{s} = ∑_q w_q
s_q$. This is preserved for realistic chemical reactions that include
all reagents and products, but not necessarily for arbitrary chemical
reaction networks that may allow generation of new products, perhaps
by consuming feedstocks that are not explicitly included in the
reagents.

It remains to show that this model gives polynomial transition probabilities. 
Any transition $s→s'$ has $s'
= s - x + y$ for some reaction $\Tuple{x,y}$. The rate $r_{xy}$ of this
reaction in $t≥s$ is given by $c_{xy} ∏_{q∈Q} (t_q)_{x_q}$ where
$c_{xy}$ is the base rate for the reaction and
$(t_q)_{x_q} = t_q (t_q - 1) \dots (t_q - x_q + 1)$ counts the number
of ways of supply $x_q$ copies of molecule $q$ in configuration $t$.
Each term in this product is at least $1$ and at most
$\card{t}^{x_q}$, assuming no species has atomic number less than $1$.
It follows that $r_{xy}$ is polynomial in $\card{t}$, and that the
probability $\frac{r_{xy}}{∑_{x'y'} r_{x'y'}}$ that this is the next
reaction to occur is also polynomial in $\card{t}$.

We thus have another class of SWSTSs, which we will refer to as \CRN.

\subsubsection{Non-closed CRNs}

As noted, SWSTSs in this class may or may not be closed.
Closed CRNs satisfy the conditions of Theorem~\ref{theorem-finish-or-fail}
and thus have the same
property of finishing or failing in polynomial expected time as other
closed SWSTSs. But non-closed CRNs can be used to demonstrate that
these results do not necessarily hold for SWSTSs that are not closed.

The example below is based on a construction of
Soloveichik~\etal~\cite{SoloveichikCWB2008}.
Consider the CRN with species $a$, $b$, and $c$, and transitions
\begin{align*}
    a + a &→ c,\\
    a + b &→ a + b + b,\\
    b + b &→ b + b + b,
\end{align*}
all of which occur at the same constant rate.
Let $T$ be the
upward-closed set of configurations that contains at least one $c$.
For each $i∈ℕ$, let $s_i$ consist of two molecules of species $a$ and $i$
molecules of species $b$. From $s_i$, there are $2i + i(i-1) = i(i+1)$ possible
transitions to $s_{i+1}$ (by applying either the $a+b$ or the $b+b$
rules) and one possible transition to $T$ (by applying the $a+a$
rule). The transition to $T$ occurs with probability
$\frac{1}{i(i+1)+1)} < \frac{1}{i^{-2}}$. Start an execution in $s_2$,
and let $X_i$ be the indicator for the event that
$s_i$ transitions to $T$. Then $\Exp{X_i} < i^{-2}$ and $\Exp{∑ X_i} <
\frac{π^2}{6} - 1 < 1$. So the union bound gives a nonzero probability
that the execution never reaches $T$ or a state that cannot reach $T$.
This contradicts the conclusion of Theorem~\ref{theorem-finish-or-fail},
demonstrating that some bound on the weight is necessary.

\subsubsection{CRNs and edge oracles}

Because the number of agents in $\CRN$ may change over time, it is not
obvious how to apply an edge oracle to the $\CRN$ model. For some
special cases it may be possible to define an extension to $\CRN$ in
which output molecules inherit hidden attributes of specific input
molecules, but the effort of doing this may be more trouble than it is
worth. So we will restrict our attention to the base $\CRN$ model
without oracles.

\subsection{Synchronous gossip models}

Population protocols and chemical reaction networks are examples
of asynchronous models, where only a single interaction or reaction
happens at a time. Such models naturally yield well-structured
transition systems, because when $s≤t$ means that $s$ corresponds to a
subcollection of agents in $t$, we can map any transition $s→s'$ to a
transition $t→t'$ by having any extra agents in $t$ do nothing.

This is not the case in a synchronous model in which every agent
updates its state in every step. Fortunately, we can often still get a
WSTS by having the agents corresponding to $s$ in a
configuration $t≥s$ interact in a way that doesn't depend on the
states of any extra agents in $t$.

In this section, we describe several models motivated by classic
gossip algorithms designed for rumor spreading
(see~\cite{HedetniemiHL1988} for a survey of early literature in this
area). Gossip algorithms come in
many variants, so we do not claim that these models necessarily
capture every aspect of these algorithms; instead, we take inspiration
from previous examples of generic gossip models,
including the very general sampling-based \concept{anonymous gossip
protocol} model of
Bertier~\etal~\cite{BertierBK2009} and the synchronous
uniform $\GOSSIP$ model of Bechetti~\etal~\cite{BechettiCNPS2014},
which formalize some of the underlying assumptions of classic
rumor-spreading algorithms in a way that allows for more generalized
computations.

We assume finite-state agents throughout, and as with $\POP$, define a
configuration as a sequence in $Q^*$ for some finite state space $Q$.
The subconfiguration relation $≤$ is ordered embedding, giving a wqo.
When adding edge oracles, the subconfiguration relation will in
addition require preserving the outputs given by the edge oracle;
showing that such a restricted $≤$ works for a particular edge oracle
will follow the same arguments as for $\POP$ and will be omitted here.

Most of these models involve one-way transmission of information. As
discussed in §\ref{section-orientation}, we will assume that when an
edge oracle is supplied, the transmitter of this information will be
supplied as the first argument to the oracle.

\subsubsection{$\GOSSIP$}

In the Bechetti~\etal~model,
at each step, each of $n$ agents samples another
agent then updates it own state according to a transition
function $δ:Q×Q→Q$ that is applied to the state of the agent
itself and that of the agent that it samples. 
Using the terminology of e.g.~\cite{KarpSSV2000,KempeDG2003},
this is a \concept{pull} model, in that
each agent pulls information from the agent it samples. 

As in $\POP$, configurations in this model can be represented
by vectors of agent states, with $≤$ given by ordered
embedding, which makes $≤$ a wqo when $Q$ is finite.

The transition relation $→$ is compatible with $≤$, because
if $s→s'$ and $s≤t$, there is a transition $t→t'$ with $s'≤t'$
in which all agents in $s$
sample only other agents in $s$.

Polynomial transition probabilities follow from calculating, given a
transition $s→s'$ and an inclusion $s≤t$, the probability that the
agents in $s$ sample the same other agents in both $s$ and $t$; this
occurs with probability exactly $\card{t}^{-\card{s}}$, which is
polynomial in $\card{t}$ since $\card{s}$ is a constant for any fixed
transition $s→s'$.

Following Bechetti~\etal, we will refer to this model as $\GOSSIP$.

\newcommand{\Oracles}[1]{${#1}^{=}$, ${#1}^{<}$, ${#1}^{≾}$, and ${#1}^{⊑}$}

As with $\POP$, we can extend $\GOSSIP$ with an edge oracle, which in
this case tells an agent updating its state what its relation is to
the agent it samples. Specifically, when $x$ samples $y$, it computes
a new state $δ(q_x,q_y,A(y,x))$, where ordering of the arguments to
$A(y,x)$ follows the convention that the transmitter of information
comes first.
Choosing the appropriate wqo for each oracle makes 
\Oracles{\GOSSIP}
all classes of SWSTSs.

\subsubsection{$\PUSH$}

Again using the terminology of~\cite{KarpSSV2000,KempeDG2003},
the pull model used in $\GOSSIP$ contrasts with
a \concept{push} model, in which information flows in the other
direction, from an agent to the agent it samples.
Here at each step, each agent updates its
state according to a transition function $δ:Q×ℕ^Q→Q$, where
the input to $δ$ is the state of the agent and the
unordered multiset of states that are pushed to it.
We will refer to this model as $\PUSH$.

The $\PUSH$ model with the ordered-embedding subconfiguration relation
is a WSTS, because if $s→s'$ and $s≤t$, then there is a
transition $t→t'$ with $s'≤t'$ in which each agent in $s$
receives only pushes from agents in $s$.

Showing polynomial transition probabilities is a bit trickier than in
$\GOSSIP$. Given a
transition $s→s'$ and embedding $f:s→t$, it is not enough for every
agent in $s$ to push to the corresponding agent in both $s$ and $f(s)$; we also
need that any new agents in $t∖f(s)$ do not push to agents in
$f(s)$. Write $n = \card{t}$ and $m= \card{s}$. 
Then for each of the $m$ agents $x∈s$ where $x$ pushes to $y∈s$ in the $s→s'$
transition, there is a probability of exactly $1/n$ that $f(x)$ pushes
to $f(y)$, and for each of the $n-m$ agents $z∈t∖f(s)$, there is a probability of exactly
$1-m/n$ that $z$ pushes to another agent in $t∖f(s)$.
These events are all independent, so the probability that they all
occur is exactly
\begin{align*}
    n^{-m} \parens*{1-\frac{m}{n}}^{n-m}
    &≥
    n^{-m} \parens*{1-\frac{m}{n}}^{n}
    \\&=
    Θ\parens*{n^{-m}},
\end{align*}
since the right-hand factor converges to $e^{-m} = Θ(1)$ for large
$n$.

Given an edge oracle $A$ with output values $V$, $\PUSH^A$ replaces
the transition function with a function $δ:Q×ℕ^{Q×V}→Q$.
For an agent $x$, the counts now track both the state of each agent
$y$ that pushes to $x$ as well as the output of the edge oracle
$A(y,x)$. The same wqos used for
$\POP$ produce stochastic WSTSs for all of \Oracles{\PUSH}.

One issue with $\PUSH$ and its edge-oracle variants is that the
transition function requires unbounded space to store, since the
multiset of observed states can be arbitrarily large. For some
applications it will be helpful to impose a uniformity condition, for
example that the transition function is computable using space
logarithmic in the number of agents. An alternative (which we do not
explore) might be use a constant-sized transition function that
processes the incoming states one at a time in an arbitrary order,
possibly with an extra marker to indicate when the states obtained in
a particular step are complete.

\subsubsection{Matching model}

Here the size of the population is
assumed to be even, and at each step, a directed matching is
applied to the population and each pair of matched agents
updates their states according to a transition function
$δ:Q^2→Q^2$. 

By a \concept{directed matching} we mean a partition of the set of agents
into pairs, with a direction assigned to each pair. A straightforward way
to sample such matching is simply to pick one of the $n!$ possible
permutations of the agents, split the resulting sequence into pairs,
and then forget the ordering of the pairs. This yields
$\frac{n!}{(n/2)!}$ possible ordered matchings, all of
which occur with equal probability.

As usual we let $s≤t$ when there is an ordered embedding $f:s→t$
that preserves states.
Such protocols form WSTSs because when $s→s'$ and
$s≤t$, there is at least one transition $t→t'$ with $s'≤t'$
obtained by having all the agents in $f(s)$ match with
each other in the same pattern as in $s$.

When $\card{t} = n$ and $\card{s} = m = O(1)$, for any fixed directed
matching on $s$ giving a particular transition $s→s'$,
we can construct a directed matching on $t$ that
applies the same directed matching to $f(s)$ by choosing a directed
matching for all the agents not in $f(s)$ in one of
$\frac{(n-m)!}{\parens*{(n-m)/2}!}$ possible ways. This gives a
probability of reaching a configuration $t'≥s'$ from $t$ of at
least
\begin{align*}
    \frac{(n-m)!/\parens*{(n-m)/2}!}{n!/(n/2)!}
    &= \frac{(n/2)!/\parens*{(n-m)/2}!}{n!/(n-m)!}
    \\&= \frac{Θ\parens*{(n/2)^{m/2}}}{Θ(n^m)}
    \\&= Θ\parens*{n^{-m/2}},
\end{align*}
which is polynomial in $n$ for fixed $m$.

We will refer to this model as $\MATCHING$. Adding edge oracles gives
the usual SWSTS variants \Oracles{\MATCHING}. As with other models
with two-way communication, the convention for
applying each oracle is that the first agent in each pair supplies the
first argument to the oracle.

\subsubsection{Token shuffling model}

Alistarh~\etal~\cite{AlistarhGR2022} define a synchronous gossip-like
model in which, for some fixed $k≥1$,  each of $n$ agents generates $k$ tokens based on its current
state, all $nk$ tokens are shuffled uniformly among the agents, and
then each agent updates its state based on its previous state and the
tokens it received according to a transition function $δ:Q×Q^k→Q$.
This is similar to the $\PUSH$ model, except that it is guaranteed that
each agent receives tokens from exactly $k$ agents, and it is possible
that the agent receives tokens from itself.

As in other models, set $s≤t$ when there is an ordered embedding $f$ from $s$ into
$t$. Given a transition $s→s'$ and embedding $f:s→t$, there is at
least one compatible transition $t→t'$ in which, for each $x∈s$, $f(x)$ receives
the same tokens as $x$. We would like to show that such transitions
occur with probability polynomial in $n = \card{t}$ for any fixed
transition $s→s'$. As before we will let $m=\card{s}$.

To show polynomial transition probabilities,
observe that the total number of possible shuffles of $nk$ tokens into
$n$ ordered sequences of $k$ tokens each
is exactly $(nk)!$, since we can just pick a permutation
and then chop it up into segments of size $k$.
To get a transition that corresponds to $s→s'$, we want to (a) send all the
tokens from agents in $f(s)$ to other agents in $f(s)$ in a way that
reflects the $s→s'$ transition, and (b) send the remaining $(n-m)k$
tokens arbitrarily to the remaining $n-m$ agents. There is at least
one way to do the first part and there are exactly $((n-m)k)!$ ways to do the
second, giving
\begin{align*}
    \ProbCond{X_{i+1}≥s'}{X_i = t ≥ s}
    &≥ \frac{((n-m)k)!}{(nk)!}
    \\&= \frac{1}{(nk)_{mk}}
    \\&≥ (nk)^{-mk}
    \\&= Θ\parens*{n^{-m}}.
\end{align*}

We will refer to this class of SWSTSs as \SHUFFLE, and define
corresponding edge oracle classes \Oracles{\SHUFFLE}, with the
convention that a token sent to $x$ by $y$ is marked with $A(x,y)$. We
will also write $\SHUFFLE_k$ for the subclass of $\SHUFFLE$ consisting
of protocols where each agent emits $k$ tokens.

\subsection{A model based on graph immersion}

Because well-quasi-orders are central to the definition of a WSTS, a
straightforward way to find more WSTS models is to consider other
classes of well-quasi-orders. The idea is to use a known wqo to impose a
quasi-order on configurations, then look for a transition rule that is
compatible with this quasi-order. In this section, we give an example
of this approach, based on a particular wqo defined on graphs.

The Graph Minor Theorem of Robertson and
Seymour~\cite{RobertsonS2004} shows that graphs form a wqo under the
operation of taking minors, where $G$ is a minor of $H$ if $G$ can be
obtained by contracting edges of a subgraph of $H$. Unfortunately, it
is difficult to come up with a plausible transition rule that is
compatible with this wqo, mostly because it is not clear what to do
with the states of the nodes that are merged by edge contractions.

However, a later result by the same authors~\cite[Theorem 1.5]{RobertsonS2010} shows
that graphs form a wqo under \concept{immersion}, which maps vertices
of $G$ to vertices of $H$ and maps edges of $G$ to disjoint paths in
$H$, while preserving a wqo on the labels of the vertices (which, as
usual, we will limit to equality over a finite state set $Q$). Immersion
preserves \concept{betweenness}: if $f:G→H$ is an immersion and $t$
appears as an intermediate vertex in a simple path from $s$ to $u$,
then $f(t)$ appears as an intermediate vertex on a simple path from $f(s)$ to $f(u)$.

We can use this to get a class of SWSTSs that we will call
$\BETWEEN$. In this model, each transition consists of choosing a
triple $\Tuple{s,t,u}$ of vertices, such that $t$ is on a simple path
from $s$ to $u$, with a probability that is uniform among all such
triples. The agents then update their state according to a transition
function $δ:Q^3→Q^3$. Since any such triple $\Tuple{s,t,u}$ maps 
through any immersion $f:G→H$ to a corresponding triple
$\Tuple{f(s),f(t),f(u)}$ of agents with the same states that also
satisfy the betweenness condition, any transition $G→G'$ maps to a
corresponding transition $H→H'$ where $f(s), f(t)$, and $f(u)$ update
their states in $H$ according to the same transition that $s,t$, and
$u$ apply in $G$. We also have polynomial transition probabilities,
because there are at most $n(n-1)(n-2)$ possible transitions in a graph with
$n$ agents, giving a minimum transition probability that is
$Ω\parens*{n^{-3}}$.

It is important that $\BETWEEN$ is constructed by limiting which
agents can interact rather than supplying an extra argument telling an
arbitrary triple of agents how they are related to each other. This is
because immersion does not preserve \emph{non}-betweenness: by adding edges
to a graph $G$, we can create a new path that puts $t$ between $s$ and
$u$ in some $H≥G$ even though $t$ is not between $s$ and $u$ in $G$.
But if we only allow interactions for triples $\Tuple{s,t,u}$ where
$t$ is between $s$ and $u$, adding more such triples just adds more
possible interactions, which is permitted by compatibility between $≤$
and $→$. The same issue complicates extending $\BETWEEN$ using an edge
oracle, so we do not consider oracle variants of this model.

We cannot deny that the $\BETWEEN$ model is an artificial one,
motivated more by a desire to use a particular wqo than any claim that
it represents a realistic model of computation. But it does illustrate
how even unusual wqos can produce SWSTSs.

\section{Computation by SWSTSs}
\label{section-computation}

To use an SWSTS as a model of computation, we must be able to
encode inputs and decode outputs. This leads to the following extended
definition:
\begin{definition}
    \label{definition-SWSTS-protocol}
    An \concept{SWSTS protocol}
    $(S,≤,→,\card{⋅},\Pr,Σ,I,V_0,V_1)$
    consists of:
    \begin{itemize}
        \item An SWSTS $(S,≤,→,\card{⋅},\Pr)$.
        \item A finite \concept{input alphabet} $Σ$.
        \item An \concept{input map} $I:Σ^*→S$ such that for any $x$, $\card{I(x)}$ is polynomial in
            $\card{x}$, and for any $x$ and $y$ where $x≤y$ in the
            subsequence ordering, $I(x)≤I(y)$.
        \item Two \concept{output sets} $V_0$ and
            $V_1$, that are closed under both $≤$ and $→$,
            representing terminal configurations with outputs $0$ and $1$, 
            respectively.
    \end{itemize}
\end{definition}

We can now compute a Boolean function $f:Σ^*→\Set{0,1}$ on an input
$x$ by starting in configuration $I(x)$ and running the SWSTS until we
reach a configuration in $V_0$ or $V_1$. Typically this will involve a
small chance of error (see Section~\ref{section-errors}), so we will
say that a SWSTS protocol $P$ computes $f$ with error $ε$ if,
for each $x$, $P$ outputs the correct value of $f(x)$ with probability
at least $1-ε$. Similarly we can accept a language $L$ with error $ε$
if, for each $x$, we have a probability of at least $1-ε$ of reaching
$V_0$ if $x∉L$ or $V_1$ if $x∈L$.

For example, a population protocol might
compute some function $f$ by having $I$ map the input letters to the input
states of a sequence of agents,
while adding a leader agent and possibly some extra agents
supplying additional storage, then running a register machine
simulation as in~\cite{AngluinAE2008fast} until the leader reaches
a terminating state with embedded output $0$ or $1$. The sets $V_0$
and $V_1$ will consist of all configurations that contain at least one
leader agent with output $0$ or $1$. Applying the error analysis
in~\cite{AngluinAE2008fast}, the probability $ε$ of error 
can be made polynomially small.

In analyzing SWSTS protocols, we will want to apply
Lemma~\ref{lemma-polynomial-path} and
Theorem~\ref{theorem-finish-or-fail}, but these only apply to closed
SWSTSs. A \concept{closed SWSTS protocol} will be one whose SWSTS is
closed.

Definition~\ref{definition-SWSTS-protocol} imposes several constraints
on the input map and output sets.

For the input map, the requirement to preserve the input size up to
polynomials means that polynomial transition probabilities for closed
SWSTS protocols become polynomial in the size of the input. The
requirement that the input map preserves the subsequence order is a
bit more artificial, but it intuitively reflects the idea that the
initial configuration should reflect the structure of the input as
much as possible, and allows for some statements about initial
configurations back into statements about inputs (see, for example,
Theorem~\ref{theorem-errors-everywhere}).

The constraints that we can express on the input map
are limited by the level of abstraction of the set of configurations
$S$. For example, we might like to insist that $I$ can't do our
computations for us by limiting $I$ to, say, a finite-state or
log-space transducer, but since $S$ is not necessarily a set of
strings, we have to allow for more general input maps. It might be
hoped that the requirement that $I$ preserve the subsequence order
would help here, but in general this is not the case: given some
arbitrary Boolean function $f:Σ^*→\Set{0,1}$, the input map $I$ that
sends $x$ to $0^{2\card{x}+f(x)}$ satisfies both the polynomial
expansion and order constraints, and allows any SWSTS that is clever
enough to compute parity to compute $f$. So instead we will have to
exercise restraint and avoid unreasonable choices of input maps.

The constraints on the output sets are intended to reflect the idea
that an output can be obtained by examining a small part of the
configuration and so is not changed by considering more components or
continuing to run the protocol past termination. For
SWSTS protocols involving agents, each output set $V_i$ will typically
consist of all configurations containing some special agent marked
with output $i$.

However, insisting that $V_0$ and $V_1$
are upward-closed means that we now cannot in general demand that they
are disjoint. For example, a configuration of a population protocol
that somehow acquired an extra leader agent might have its two leader
agents decide different values. Usually all we can ask is that any
configuration that is reachable from an initial configuration appears
in at most one of $V_0$ and $V_1$.

\subsection{Errors}
\label{section-errors}

Like most probabilistic computational models, SWSTS protocols are
vulnerable to error. The generality of the SWSTS approach lets us
prove this without having to reference the details of any specific model.

\begin{theorem}
    \label{theorem-errors-everywhere}
    Let $P = (S,≤,→,\card{⋅},\Pr,Σ,I,V_0,V_1)$ be a closed SWSTS
    protocol. Then either $P$ outputs the same value in all
    executions, or there is a non-empty, upward-closed set $A$ such
    that for every input $x∈A$, $P$ computes the wrong output for
    $x$ with at least polynomial probability.
\end{theorem}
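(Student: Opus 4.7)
The plan is to prove the contrapositive direction: assuming $P$ does not output the same value in all executions, I will construct the required set $A$. If both outputs occur somewhere, then there must exist inputs $x_0, x_1 \in \Sigma^*$ (not necessarily distinct) such that some execution starting from $I(x_0)$ reaches $V_0$ and some execution starting from $I(x_1)$ reaches $V_1$. This gives $I(x_0) \in \Pred^*(V_0)$ and $I(x_1) \in \Pred^*(V_1)$. By Lemma~\ref{lemma-predecessors}, both $\Pred^*(V_0)$ and $\Pred^*(V_1)$ are upward-closed in $S$.

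I would then take $A$ to be the set of inputs $y \in \Sigma^*$ that contain both $x_0$ and $x_1$ as subsequences; for instance $y = x_0 x_1$ lies in $A$, and the set is clearly upward-closed under the Higman subsequence ordering. By the order-preservation property of the input map, every $y \in A$ satisfies $I(y) \ge I(x_0)$ and $I(y) \ge I(x_1)$ in the SWSTS wqo, so $I(y)$ lies in the intersection $\Pred^*(V_0) \cap \Pred^*(V_1)$.

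At this point I apply Lemma~\ref{lemma-polynomial-path} twice, once to each of $V_0$ and $V_1$, taken as the target upward-closed set. This yields constants $\ell, k_0, k_1$ such that from $I(y)$ the execution reaches $V_0$ within $\ell$ steps with probability $\Omega(\card{I(y)}^{-k_0})$ and reaches $V_1$ within $\ell$ steps with probability $\Omega(\card{I(y)}^{-k_1})$. Since the input map is required to make $\card{I(y)}$ polynomial in $\card{y}$, both of these probabilities are inverse polynomial in $\card{y}$. Whichever bit is the intended output of $P$ on $y$, the opposite output is reached with at least polynomial probability, giving the desired error bound for every $y \in A$.

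The main obstacle is interpreting ``wrong output'' cleanly given that $V_0$ and $V_1$ need not be disjoint; I would resolve this by noting that a well-defined computed function $f$ assigns exactly one of $0,1$ to each $y$, so the probability of reaching the opposite output set $V_{1-f(y)}$ from $I(y)$ is a probability of outputting the wrong value, and Lemma~\ref{lemma-polynomial-path} lower-bounds this by an inverse polynomial in $\card{y}$. A secondary subtlety is confirming that the bound delivered by Lemma~\ref{lemma-polynomial-path} is uniform across $y \in A$: this is automatic because the constants $\ell, k_0, k_1$ depend only on the fixed basis elements of $\Pred^*(V_0)$ and $\Pred^*(V_1)$ witnessed through $x_0$ and $x_1$, not on the particular $y \ge x_0, x_1$ under consideration.
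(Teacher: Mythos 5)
Your proposal is correct and follows essentially the same route as the paper's proof: it uses Lemma~\ref{lemma-predecessors} to get upward-closure of $\Pred^*(V_0)$ and $\Pred^*(V_1)$, the concatenation $x_0x_1$ together with order-preservation of $I$ to witness non-emptiness, and Lemma~\ref{lemma-polynomial-path} to obtain a polynomial probability of reaching the wrong output set. The only cosmetic difference is that you take $A$ to be the set of common supersequences of $x_0$ and $x_1$, whereas the paper takes the larger set $I^{-1}\parens*{\Pred^*(V_0)\cap\Pred^*(V_1)}$; both choices are non-empty, upward-closed, and satisfy the stated property.
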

\begin{proof}
    Let $A = I^{-1}\parens*{\Pred^*{V_0}∩\Pred^*(V_1)}$. Then for any
    $x∈A$ and $i∈\Set{0,1}$, $I(x)∈\Pred^*(V_i)$ and by
    Lemma~\ref{lemma-polynomial-path}, an execution starting in $I(x)$
    reaches $V_i$ with at least polynomial probability.
    Since this is true for both $V_0$ and $V_1$, and only one of these
    outputs is correct for $x$, for any $x∈A$, there is a polynomial
    probability that $P$ computes the wrong output for $x$.

    From Lemma~\ref{lemma-predecessors}, since each output set $V_i$
    is upward-closed, $\Pred^*(V_i)$ is also upward-closed.
    So $\Pred^*(V_0)∩\Pred^*(V_1)$, as the intersection of two
    upward-closed sets, is upward-closed, and since $I$ is
    order-preserving, $A=I^{-1}\parens*{\Pred^*{V_0}∩\Pred^*(V_1)}$ is
    upward-closed.

    Suppose now that there exist inputs $x_0$ and $x_1$ that yield
    outputs $0$ and $1$ in some executions. Then $I(x_0)∈\Pred^*(V_0)$ 
    and $I(x_1)∈\Pred^*(V_1)$. Let $y$ be the concatenation $x_0x_1$
    of $x_0$ and $x_1$. Then $x_0,x_1≤y$ in the subsequence order so
    $I(x_0),I(x_1)≤I(y)$. Because each $\Pred^*(V_i)$ is
    upward-closed, it follows that $I(y)$ is in both $\Pred^*(V_0)$
    and $\Pred^*(V_1)$, which puts $y$ in $A$. So either $P$ outputs
    the same value in all executions, or $A$ is nonempty.
\end{proof}

Theorem~\ref{theorem-errors-everywhere} shows that non-constant SWSTS
protocols have a polynomial probability of error for almost every
input that is sufficiently large. This is because if $x$ is some element of a
non-empty set of strings $A$ over an alphabet $Σ$ that is upward-closed
with respect to the subsequence relation, then the expected number of
random draws from $Σ$ until we get a supersequence of $x$ is only the
constant value $\card{x}⋅\card{Σ}$, and by Markov's inequality this
means that a random string $y$ of length $n$ will have $x≤y$ (and thus
be in $A$) with probability at least
$1-\frac{\card{x}⋅{\card{y}}}{n}$, which converges to $1$ in the limit. This
provides some justification for assuming that computations by
SWSTS protocols must allow for two-sided error.

\section{Simulating SWSTS protocols with Turing machines}
\label{section-simulating-SWSTS}

An SWSTS protocol terminates when it reaches a configuration in $V_0$
or $V_1$. Because these sets are required to be upward-closed,
Theorem~\ref{theorem-finish-or-fail} applies: after a polynomial
expected number of steps, a closed SWSTS protocol either terminates or
reaches a configuration from which it is no longer possible to
terminate. We will use this fact to show that a probabilistic
polynomial-time Turing machine can compute the output of an SWSTS
protocol with polynomially small error, as long as it can represent
configurations efficiently.

We give two versions of this result.
Section~\ref{section-simulating-SWSTS-in-BPP} gives sufficient
conditions for simulating an SWSTS protocol in probabilistic
polynomial time, while Section~\ref{section-simulating-SWSTS-in-BPL}
gives stronger conditions that allow a protocol to be simulated in
probabilistic log space.

\subsection{Simulating SWSTS protocols in $\classBPP$}
\label{section-simulating-SWSTS-in-BPP}

Call an SWSTS \concept{uniform} if there is a mapping
from configurations $s$ to representations $\hat{s}∈∑^*$ for some
finite alphabet $Σ$, such that
(a) $\card{\hat{s}}$ is polynomial in
$\card{s}$;
(b) there is a probabilistic polynomial-time
Turing machine that takes as input a representation $\hat{s}$ of a
configuration $s$ and outputs $\hat{t}$ for each $s→t$ with
probability at least $(1-o(n^{-c}))\Prob{s→t}$ for any constant $c>0$;
and
(c) for any fixed $s$, there is a Turing
machine that computes $s≤t$ given $\hat{t}$ in time polynomial in
$\card{\hat{t}}$.

Call an SWSTS protocol uniform if, in addition to the above
requirements, there is a polynomial-time Turing machine that takes
input $x$ and produces output $\widehat{I(x)}$. 

\begin{theorem}
    \label{theorem-uniform-SWSTS-in-BPP}
    Let $L$ be a language accepted by a uniform closed SWSTS protocol
    with bounded two-sided error $ε<1/2$. Then $L$ is in $\classBPP$.
\end{theorem}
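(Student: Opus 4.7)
The plan is to simulate the SWSTS protocol step by step with a probabilistic polynomial-time Turing machine $M$, using Theorem~\ref{theorem-finish-or-fail} to bound the number of simulation steps and the uniformity conditions to make each step efficient. On input $x$, the machine $M$ first invokes the input-map Turing machine to produce $\widehat{I(x)}$ in polynomial time. Since $\card{I(x)}$ is polynomial in $\card{x}$ by definition of an SWSTS protocol, and the SWSTS is closed so every reachable configuration $s$ satisfies $\card{s}=\card{I(x)}$, condition (a) of uniformity guarantees that the representation of every configuration encountered during the simulation has length polynomial in $\card{x}$.

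Next, $M$ repeatedly calls the transition simulator from condition (b) on the current representation, updating it after each step, for some polynomial number of steps $T=T(\card{x})$ to be chosen. After each step, $M$ tests whether the current configuration lies in $V_0$ or $V_1$: since each $V_i$ is upward-closed and $\le$ is a wqo, each $V_i$ has a fixed finite basis that can be hardcoded into $M$ as part of the protocol's specification, and condition (c) lets $M$ check $s\le t$ in polynomial time for each such basis element $s$. If $V_1$ is reached, $M$ outputs $1$; if $V_0$ is reached, $M$ outputs $0$; if neither occurs within $T$ steps, $M$ outputs an arbitrary default.

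For correctness, three sources of error must be balanced. The protocol itself errs with probability at most $\epsilon<1/2$. By Theorem~\ref{theorem-finish-or-fail}, the expected number of protocol steps until $V_0\cup V_1$ or a stuck configuration in $S\setminus\Pred^*(V_0\cup V_1)$ is reached is polynomial in $\card{I(x)}$; Markov's inequality then gives a polynomial $T$ for which the probability of not yet having reached either of these fates is smaller than any prescribed constant $\delta>0$. The step I expect to be the main obstacle is controlling the accumulated simulation approximation error: each simulated transition matches the true distribution only up to a multiplicative $o(n^{-c})$ slack, so by a union bound the total variation distance between the simulated trajectory distribution and the true one over $T$ steps is at most $O(T\cdot n^{-c})$, which can be made $o(1)$ by choosing the constant $c$ in condition (b) sufficiently large relative to the polynomial degree of $T$. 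Summing the three contributions bounds the overall error strictly below $1/2$ for all sufficiently large inputs, with the finitely many small inputs handled by a lookup table in $M$, so $L\in\classBPP$.
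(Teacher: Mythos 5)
Your proposal is correct and follows essentially the same route as the paper's proof: simulate the protocol step by step using the uniformity conditions, test membership in $V_0$ and $V_1$ against a hardcoded finite basis, truncate after polynomially many steps justified by Theorem~\ref{theorem-finish-or-fail} plus Markov's inequality, and sum the three error sources to stay below $1/2$. The only differences are presentational (you phrase the accumulated sampling error as a total-variation bound where the paper multiplies per-step factors over executions, and you make explicit the role of closedness in keeping representation sizes polynomial), so no further comparison is needed.
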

\begin{proof}
    Let $P = \Tuple{S,→,≤,\card{⋅},\Pr,I,V_0,V_1}$ be a protocol that accepts
    $L$ with error bound $ε<1/2$.

    From Theorem~\ref{theorem-finish-or-fail}, there exists a constant
    $k$ such that an execution of $P$ on input $x$ either reaches a
    configuration in $V_0$ or $V_1$, or reaches a configuration that
    cannot reach $V_0$ or $V_1$, in expected
    $O\parens*{\card{I(x)}^k}$ steps, which is polynomial in $n =
    \card{x}$. Choose a constant $c$ so that this polynomial is $O(n^c)$.

    Observe that because $V_0$ and $V_1$ are both upward-closed, we
    can compute membership of some configuration $t$ in either set in
    polynomial time from $\hat{s}$ by checking whether $t$ lies above
    any of a finite number of basis elements $s$.

    Given an input $x$, simulate a execution of $P$ of length at most
    $n^{c+1}$ in polynomial time by computing
    $X_0 = \widehat{I(x)}$, then repeatedly sampling transitions $X_i
    → X_{i+1}$ until either (a) the simulation reaches some
    configuration in $V_0$ or $V_1$, or (b) the number of simulated
    steps exceeds $n^{c+1}$. In the first case, output the appropriate
    value $0$ or $1$; in the second case, output $0$. The second case
    may introduce an additional source of error by terminating
    the computation prematurely, but Markov's inequality says that the
    probability that it occurs is $O(1/n)$. 

    To bound the probability that the simulated protocol produces the
    correct answer, observe that for any particular execution $s_0 → s_1 \dots
    s_m$ of the SWSTS protocol that occurs with some probability
    $p = ∏_{i=0}^{m-1} \Prob{s_i→s_{i+1}}$,
    each transition $\hat{s}_i → \hat{s}_{i+1}$ is chosen in the simulation 
    with probability at least $1-o(n^{-c-2}) \Prob{s_i → s_{i+1}}$,
    giving a probability of sampling this execution of at least
    $(1-o(n^{-c-2}))^t ∏_{i=0}^{m-1} \Prob{s_i → s_{i+1}} ≥
    (1-o(n^{-c-2}))^{n^{c+1}} p = (1-o(1)) p$, since the error factor
    converges to $1$ in the limit. Aggregating these
    errors over all accepting/rejecting executions adds at most $o(1)$
    to our error budget.

    So the total error is bounded by $ε+O(1/n)+o(1) < 1/2-Ω(1)$ 
    for sufficiently large $n$, putting $L$ in $\classBPP$.
\end{proof}

For example, any $\POP^<$ protocol with a polynomial-time-computable
input map $I$ is uniform since we are already
representing configurations as sequences of agent states
drawn from a finite set, giving (a) $\card{\hat{s}} =
\card{s}$; (b) polynomial-time sampling of $s→t$ since we can just
pick a pair of agents from the sequence\footnote{Doing this exactly in
polynomial time assuming our probabilistic Turing machine only has
access to random bits cannot be guaranteed unless $n$ and $n-1$ are
both powers of $2$; but this is accounted for by allowing an exponentially small
relative error $1-o(n^{-c})$ in the definition of uniformity.} and update their states
according to the constant-time-computable transition function $δ$; and (c)
polynomial-time testing if $s≤t$ since we can test
if a fixed $s$ is a subsequence of $t$ using a finite-state machine.
So Theorem~\ref{theorem-finish-or-fail} tells us that any
language accepted by a $\POP^<$ protocol with bounded two-sided error
and a polynomial-time-computable input map fits in $\classBPP$.

A similar analysis applies to the base $\POP$ model and any oracle
variant $\POP^A$ that gives an SWSTS. For closed $\CRN$ protocols with
rational reaction rates and polynomial-time-computable input maps,
standard simulation algorithms~\cite{Gillespie1977,GibsonB2000}
demonstrate uniformity. Similarly, uniformity holds for $\GOSSIP$,
$\MATCHING$, $\SHUFFLE$, and (assuming $δ$ is polynomial-time
computable) $\PULL$, as well as their oracle variants. So all of these
classes of protocols can be simulated in $\classBPP$.

It is not clear whether $\BETWEEN$ yields uniform SWSTSs in general.
The issue is that while it is known that testing graph immersion is
fixed-parameter tractable~\cite{GroheKMW2011}, it is not immediately
obvious if this result extends to immersions of labeled graphs.
So it may not be the case that the simulation in the proof of
Theorem~\ref{theorem-uniform-SWSTS-in-BPP} can efficiently detect
membership in arbitrary upward-closed sets $V_0$ and $V_1$.
We can avoid this issue for specific $\BETWEEN$ protocols where $V_0$
and $V_1$ only test for the presence of an agent in an appropriate
output state, which is easily checked by a polynomial-time TM.
For protocols using this output convention, we get languages in $\classBPP$.

\subsection{Simulating symmetric SWSTS protocols in $\classBPL$}
\label{section-simulating-SWSTS-in-BPL}

In some cases, we can take advantage of symmetry to simulate an SWSTS
in $\classBPL$ instead of $\classBPP$. As with $\classBPP$, we rely on
Theorem~\ref{theorem-finish-or-fail} to guarantee termination in
polynomial time, but rely on additional specific properties of the
SWSTS to show that it can be simulated in logarithmic space.
The key idea is that if the agents in an SWSTS protocol are
interchangeable, then we can replace a polynomial-sized 
representation of a configuration as a string in $Σ^*$ with a
logarithmic-sized representation as a multiset in $ℕ^Σ$.

Call an SWSTS \concept{symmetric} if there is an equivalence relation
$\sim$ on configurations such that if $s \sim s'$, then for any $t$,
$\sum_{t' \sim t} \Prob{s→t'} = \sum_{t' \sim t} \Prob{s'→t'}$.
Letting $[s]$ be the equivalence class containing $s$, this says that
there is a well-defined probability $\Prob{[s]→[t]} = ∑_{t' \sim t}
\Prob{s→t'}$ of a transition to an element of $[t]$ from an element of
$[s]$ that does not depend on the choice of representative $s$.

Call a closed SWSTS \concept{symmetrically log-space-uniform} if
(a) it is symmetric;
(b) there is a mapping from configurations $s$ to representations
$\hat{s} ∈ Σ^*$ for some finite alphabet $Σ$ that preserves symmetry
in the sense that if $s \sim t$, then $\hat{s}$ is a permutation of
$\hat{t}$;
(c) there is a probabilistic log-space transducer that takes as input a
representation $\hat{s}$ of a configuration $s$ and outputs a
configuration $\hat{t}$, with the property that the probability of
outputting some $t'$ in $[t]$ is at least $(1-o(n^{-c}))
\Prob{[s]→[t]}$ for any constant $c>0$; and
(d) for any fixed $s$, there is a Turing machine that computes $s≤t$
given $\hat{t}$ using space logarithmic in $\card{\hat{t}}$.

Call a closed SWSTS protocol symmetrically log-space-uniform if, in
addition to the above requirements, there is a log-space transducer
that takes input $x$ and produces an output that is a permutation of
$\widehat{I(x)}$.

\begin{theorem}
    \label{theorem-log-space-uniform-SWSTS-in-BPL}
    Any language accepted by a symmetrically log-space-uniform SWSTS
    protocol is symmetric and contained in $\classBPL$.
\end{theorem}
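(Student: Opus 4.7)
The plan is to adapt the proof of Theorem~\ref{theorem-uniform-SWSTS-in-BPP} to the log-space setting, with symmetry of the language falling out as a corollary of the symmetry hypothesis on the protocol. Let $P$ be the given protocol accepting $L$ with fixed error $\varepsilon < 1/2$. By Theorem~\ref{theorem-finish-or-fail}, on input $x$ of length $n$, an execution of $P$ reaches $V_0 \cup V_1$ or a configuration from which $V_0 \cup V_1$ is unreachable in expected time $O(n^c)$ for some constant $c$. The $\classBPL$ machine will simulate at most $n^{c+1}$ steps and default to output $0$ otherwise; Markov's inequality bounds the contribution of this truncation to the error by $O(1/n)$.

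The simulation is assembled from the three log-space subroutines supplied by the uniformity conditions. First, invoke the log-space transducer for $I$ to write a canonical representative of $\widehat{I(x)}$ onto the work tape. Then repeatedly call the probabilistic log-space transition transducer of condition~(c), which samples and writes a representative of a successor configuration; a second region of the work tape serves as scratch, after which the old representative is erased. After each step, use condition~(d) to test whether the current configuration lies above a basis element of $V_0$ or of $V_1$, which is possible since these sets are upward-closed with respect to $\le$ and so have finite bases. An $O(\log n)$-bit counter enforces the step cap. Aggregating the $o(n^{-c-2})$ per-step sampling errors over $n^{c+1}$ steps adds only $o(1)$ to the budget, so the total error stays below $1/2 - \Omega(1)$ for large $n$, placing $L$ in $\classBPL$.

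Symmetry of $L$ follows because the simulation depends on the input only through the equivalence class $[I(x)]$. If $x'$ is a permutation of $x$, the natural reading of the symmetric setting is that $I(x) \sim I(x')$, so the log-space transducer for $I$ outputs permutation-equivalent representatives, hence the same canonical form. Condition~(a) then shows by induction on step count that the distribution over successor equivalence classes is the same in both cases, and since condition~(d) performs its test at the level of representatives, the probabilities of entering $V_0$ or $V_1$ match, giving symmetry of $L$.

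The main obstacle I anticipate is the space bound for storing a configuration: condition~(b) only asserts that $\hat{s}$ is polynomial in $\card{s}$, and so polynomial in $n$, which will not fit on a log-space work tape. I would resolve this by exploiting the symmetry-preserving property of condition~(b) to work with a canonical representative of $[s]$ of logarithmic size---for example, when $\sim$ identifies configurations that differ by permuting agents drawn from a finite state set $Q$, the vector of per-state counts supplies such a representative of size $O(|Q|\log n) = O(\log n)$. This is the compact form that the log-space transition transducer of condition~(c) is implicitly designed to read and write, allowing the iterated simulation to stay within $O(\log n)$ total space.
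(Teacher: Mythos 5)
Your proposal is correct and follows essentially the same route as the paper: truncate at polynomially many steps via Theorem~\ref{theorem-finish-or-fail}, reuse the $\classBPP$ error analysis, and exploit symmetry to replace the polynomial-length representation $\hat{s}$ by a logarithmic-size vector of symbol counts. The one point you gloss over---the transducer of condition~(c) formally expects the full string $\hat{s}$, not the count vector---is handled in the paper by the standard log-space composition trick of recomputing each input-tape symbol on demand from the counts (via a sorted representative) and intercepting output writes as counter increments, which is exactly the mechanism your phrase ``implicitly designed to read and write'' is standing in for.
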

\begin{proof}
    Symmetry is immediate from the fact that for any computation by
    the protocol on some input $x$ that reaches $V_0$ or $V_1$,
    there is an equivalent computation on any permutation of $x$ that
    reaches the same set with the same probability.

    Proving containment in $\classBPL$ is essentially the same as for
    Theorem~\ref{theorem-log-space-uniform-SWSTS-in-BPL}, except that
    we replace polynomial-time steps throughout with log-space steps,
    and must take some care about representing intermediate
    configurations to avoid using more than logarithmic space.

    Because $\Prob{[s]→[t]}$ doesn't depend on which members of the
    equivalence classes we pick, we can represent $[\hat{s}]$ as an
    element $\widehat{[s]}$ of $ℕ^Σ$ by counting how many times each
    symbol appears in $s$. Given such counts (which take space
    logarithmic in $\card{\hat{s}}$), we can simulate the input to the
    log-space transducer sampling $→$ by choosing a sorted permutation
    of $s$ and calculating the symbol at each simulated input tape
    location from the counts in $\widehat{[s]}$. Similarly, each time
    the transducer write a symbol to its output tape, we replace this
    operation with increment a counter to eventually obtain a vector
    of counts $\widehat{[t]}$.

    Applying the input map $I$ and testing for the terminal
    test $V_0$ and $V_1$ uses the same technique to represent
    equivalence classes of configurations. The symmetry properties of
    the SWSTS mean that replacing each representation with its sorted
    version does not affect the behavior of any of these parts of the
    simulation.

    Error analysis is identical to that in the proof of
    Theorem~\ref{theorem-uniform-SWSTS-in-BPP}.
\end{proof}

It is straightforward to show that any protocol in $\POP$ satisfies the necessary
symmetry properties, and that its transition relation can be computed with
a probabilistic log-space transducer. We can also easily test $s≤t$
for fixed $s$ in logarithmic space. So any language accepted by $\POP$
with a symmetric log-space-computable input map is a symmetric language in
$\classBPL$.

Assuming appropriate input maps, similar results can be shown for
closed $\CRN$ protocols with rational reaction rates and for
$\GOSSIP$, $\MATCHING$, and $\SHUFFLE$. For the synchronous gossip
models, the sampling is a bit more complicated than for $\POP$ or
$\CRN$, because we must be careful not to use any agent or token more
than once. But this can be done by iterating appropriate sampling
without replacement using counters of unused agents in the initial
configuration that are decrement upon use, while accumulating counters
of agents seeing particular values or combinations of tokens that are
incremented as needed with each iteration. This only works because the
set of possible views and previous states for an agent at each step in
one of these models has constant size.

For $\PUSH$ we must be more careful. The obvious scheme of tracking
which states are pushed to which agents fails because each agent may
see anywhere from $0$ to $n$ states, potentially giving more views
than can be stored in logarithmic space. However, if we assume that
the transition function for a single agent depends only on the counts
of seen states, we can sample the view of each target agent $a$
separately by sampling for each previously unseen agent whether it
pushes to $a$ or not; agents that push to $a$ are included in the
appropriate count, while agents that don't are thown back in the pool
to be sampled by the next agent. For a symmetric, log-space computable
transition function at each agent, this satisfies the requirements of
the theorem, showing that this particular constrained version of
$\PUSH$i accepts only symmetric languages in $\classBPL$.

\section{Simulating Turing machines with SWSTS protocols}
\label{section-simulating-TM}

Population protocols~\cite{AngluinAE2008fast} and chemical reaction
networks~\cite{CookSWB2009} are known to be
able to simulate probabilistic Turing machines, often via intermediate
representations like register machines.
In this section, we show that
many SWSTS protocols that either have or can encode a linear structure on
agents can simulate Turing machines directly with small probability of
error at each step, giving them the ability to compute any predicate
in $\classBPP$.

We will concentrate first on $\POP$ and its
edge-oracle variants, then reduce to these cases by showing how to
simulate $\POP^A$ in various other SWSTS models with the same
edge oracle, or with some other structure equivalent in power to an
appropriate edge oracle as in $\BETWEEN$.

\subsection{Simulating a Turing machine in $\POP^A$}
\label{section-TM-in-POP}

For the basic population protocol model, established
results~\cite{AngluinADFP2006,AngluinAE2008fast} show how to implement
a register machine that is almost equivalent in power to a probabilistic
Turing machine with a log-sized work tape, with a polynomially-small
probability of error at each step. The ``almost'' here is a result of
not being able to represent the order of the symbols in the input string:
because of symmetry between agents, these simulated machines can only observe
their inputs up to an arbitrary permutation. Nonetheless, these
results show that $\POP$ can compute any symmetric language in $\BPL$.
Combined with Theorem~\ref{theorem-log-space-uniform-SWSTS-in-BPL},
this shows that unmodified $\POP$ computes precisely the
symmetric languages in $\classBPL$.

Adding an edge oracle may significantly increase the power of the
model. In $\POP^<$, most agents have an immediate predecessor
and successor, and we can use this structure to simulate a Turing
machine tape with each cell assigned to some agent. Because the agents
cannot observe the successor relation directly (which would produce a
non-well-structured transition system, as discussed in
§\ref{section-successor}), the agents will
instead need to repeatedly run a subprotocol that finds the predecessor or
successor of the current tape cell. This is not difficult,
but it requires waiting for the subprotocol to finish, which 
in turn requires building some sort of clock mechanism and allowing
for a polynomial probability of failure if the clock ticks before the
subprotocol converges.
An additional issue is that we will need both a designated leader and
a number of agents polynomial in the size of the input to represent
the entire tape, but these can be supplied by the input map.
Details of the resulting construction are given in
Section~\ref{section-POP-ordered-BPP}, where it is shown that
$\classBPP ⊆ \POP^{<}$.

As it happens, $\POP^{=}$ also has enough power to simulate a
probabilistic polynomial-time Turing machine.
Because the edge oracle does not provide an order on agents, we can
instead represent the $i$-th tape cell as a collection of $i$ agents
in the same equivalence class, and use comparison protocols to move
between $i$ and $i-1$ or $i+1$ as needed. This requires a bit more
setup in the input map to supply the $Θ(m^2)$ agents needed to encode
$m$ tape cells and to set up their colors to group them appropriately.
The full construction is given in
Section~\ref{section-POP-unordered-BPP}, where it is shown that
$\classBPP ⊆ \POP^=$.

\subsubsection{Simulating a Turing machine in $\POP^<$}
\label{section-POP-ordered-BPP}

Given a probabilistic Turing machine that finishes in $n^c$ steps, we
will use the input map to construct a simulated tape
with $n^c$ agents, plus a handful of additional utility agents. These
additional agents include a leader agent $L$ that coordinates the
simulation and holds the state of the Turing machine head, a timer
agent $T$ used to implement a simple clock as
in~\cite{AngluinADFP2006}, and two agents $R_0$ and $R_1$ used to
implement a simple coin-flipping mechanism.
It is easy to see that this input map is both order-preserving and
log-space-computable.

Most of the work is done by the leader agent, which will be directly
responsible for simulating the TM head and updating the contents of
tape agents. The tape agents will nonetheless have to carry out a
protocol for finding the predecessor or successor of the current head
position.

This protocol is described in
Table~\ref{table-predecessor-successor}. For simplicity, the
table only includes transitions where the initiator
is in a position less than that of the responder, but we imagine that
for each such transition there is a symmetric transition where the
positions of the agents are reversed. We also omit transitions that
have no effect and any transitions involving the leader agent, whose
role is limited to switching the state of the central agent after an
appropriate amount of time.

\begin{table}
    \begin{displaymath}
        \begin{array}{r@{\,<\,}l@{\,\,→\,\,}r@{\,<\,}l@{\quad}l}
            b&0&-&0& \text{generate $-$} \\
            -&0&-&0& \\
            +&0&-&0& \\
            \hline
            0&b&0&+& \text{generate $+$}\\
            0&-&0&+& \\
            0&+&0&+& \\
            \hline
            0&-&0&b& \text{remove erroneous $-$}\\
            +&0&b&0& \text{remove erroneous $+$}\\
            \hline
            -&-&b&-& \text{prune $-$}\\
            +&+&+&b& \text{prune $+$}\\
        \end{array}
    \end{displaymath}
    \caption{Finding predecessors and successors in $\POP^{<}$}
    \label{table-predecessor-successor}
\end{table}

Initially, the agent at the current head position is assigned state
$0$. This causes it to recruit agents to its left and right in states
$-$ and $+$ respectively. After waiting long enough that the immediate
predecessors and successors are likely to have been recruited, the
central agent is switched to the blank state $b$. Excess $-$ and $+$
agents are then pruned until only the immediate predecessor and
successor remain.

Because each step during the first phase recruits the immediate
predecessor with probability $Θ(N^{-2})$, waiting $Ω(N^3)$ steps gives
an exponentially low probability that the immediate predecessor is not
recruited.\footnote{There is a possibility, if agents are marked with
the wrong direction,
that a newly recruited predecessor or successor will be pruned by an
erroneously marked agent that has not yet been removed. But standard
coupon collector bounds give an exponentially small probability that
any erroneous agents remain after $\widetilde{O}(N^2)$ steps, and
it happens that in our TM simulation the issue does not arise because
no such erroneous off-side agents are ever created.} 
The same holds for the immediate successor. For the second
phase, we can use the analysis from~\cite{AngluinAE2008fast} of leader election by
cancellation to argue that the expected time for the number of $-$ or
$+$ agents to reduce to one is $O(N^2)$. Again waiting $Ω(N^3)$ steps
gives an exponentially small probability of failure.

Unfortunately, Lemma~\ref{lemma-polynomial-path} tells us that any
attempt to wait for more than a constant number of steps in any SWSTS
will fail with at least polynomial probability. So to wait $Ω(N^3)$
steps, we cannot do much better than the simple clock
of~\cite{AngluinADFP2006}, where the leader agent waits to encounter a
unique timer agent $k$ times in a row without seeing any other agents.
As observed in~\cite[Theorem 24]{AngluinADFP2006}, this takes
$O(N^{k+1})$ steps on average; a simple union bound also shows that
the probability that it finishes in $O(N^\ell)$ steps is $O(N^{\ell-k-1})$.
So for appropriate choice of $k$, the $Ω(N^3)$-step delays needed in
Table~\ref{table-predecessor-successor} can be carried out with any
desired polynomial probability of failure.

We can now describe the Turing machine simulation. For each step, the
leader first flips a coin by waiting to see either $R_0$ or $R_1$. It
then waits to encounter the tape agent marked as holding the current
head position, updates the tape symbol stored there according to the
simulated TM's transition function, and replaces the head mark with a
$0$ mark for the neighbor-finding protocol described in
Table~\ref{table-predecessor-successor}. It then
carries out this protocol to find the
immediate predecessor and successor of the previous head position.
Depending on whether the head is moving left or right, the leader
waits to encounter either the $-$ agent or $+$ agent and marks it as
the new head position.

We can thus simulate one step of a probabilistic Turing machine with
any polynomially small failure probability. By adjusting the exponent
appropriately and taking a union bound, we get polynomially
small failure probability even after $n^c$ steps. This shows that
$\classBPP ⊆ \POP^{<}$.

Inclusion in the other direction is immediate from
Theorem~\ref{theorem-uniform-SWSTS-in-BPP}. So we have:
\begin{theorem}
    \label{theorem-BPP-ordered-POP}
    $\POP^{<} = \classBPP$.
\end{theorem}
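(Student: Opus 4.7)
The plan is to prove both inclusions. For $\POP^{<} \subseteq \classBPP$, I would invoke Theorem~\ref{theorem-uniform-SWSTS-in-BPP}, after verifying uniformity of $\POP^{<}$. Each configuration is naturally represented by a string giving the agent states in order of their hidden colors (that is, in order of the agent positions), so $\card{\hat{s}} = \card{s}$; sampling a transition is done by sampling a pair of indices and applying $\delta$ with the appropriate $<$/$>$ input; and for any fixed $s$, testing $s \le t$ is subsequence matching with the trivial wqo on $Q$, which can be done by a finite-state scan of $\hat{t}$. Thus $\POP^{<}$ is uniform in the sense of Section~\ref{section-simulating-SWSTS-in-BPP}, and Theorem~\ref{theorem-uniform-SWSTS-in-BPP} yields the upper bound.

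For $\classBPP \subseteq \POP^{<}$, I would formalize the construction sketched in Section~\ref{section-POP-ordered-BPP}. Fix a probabilistic Turing machine $M$ that decides $L$ in time $n^c$ with error at most, say, $1/4$. The input map $I$ sends an input $x$ of length $n$ to a configuration consisting of $n^c$ tape agents (the first $n$ carrying the input symbols, the rest carrying blanks), together with a leader $L$ holding the initial TM state, a timer $T$, and two coin agents $R_0,R_1$; this $I$ is clearly polynomial-size, log-space computable, and monotone in the subsequence order. The output sets $V_0,V_1$ consist of all configurations containing a leader in the accept or reject halting state. A single simulated step of $M$ is then executed by $L$ as follows: (i) flip a coin by waiting to meet one of $R_0,R_1$; (ii) meet the unique agent marked as the current head and update its tape symbol by applying the TM transition; (iii) relabel that agent with the neutral state $0$ and run the predecessor/successor protocol of Table~\ref{table-predecessor-successor}; (iv) after the timer fires to indicate enough time has elapsed, pick the unique surviving $-$ (resp.\ $+$) agent as the new head. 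Between the critical waits, the leader uses the single-timer clock of~\cite{AngluinADFP2006} to delay $\Omega(N^3)$ steps, where $N = n^c$.

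The main technical burden is the error analysis, and this is where I expect the main obstacle. I would package the correctness of a single simulated step as a lemma: with probability $1 - N^{-\Omega(1)}$, the coin agent encountered is uniform, the head symbol is updated correctly, the $\Omega(N^3)$ wait from the clock indeed occurs, and the predecessor/successor protocol converges to the unique immediate neighbor before the timer fires. The delicate part is showing that the neighbor-finding protocol both recruits the correct immediate neighbor (the probability of the corresponding pairwise encounter being $\Theta(N^{-2})$ per step, so $\Omega(N^3)$ steps suffice) and prunes all extras (standard analysis of cancellation leader election as in~\cite{AngluinAE2008fast}), and that no stray $\pm$ agents are carried over from previous simulated steps. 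For the latter, one observes inductively that the construction only ever introduces $\pm$ agents on the correct side of the head, and all of them are removed before the next step begins; hence the footnote-style concern about erroneous off-side agents never arises.

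Finally, taking a union bound over the $n^c$ simulated steps, the total failure probability of the simulation is still $o(1)$, so the overall $\POP^{<}$ protocol accepts $L$ with error at most $1/4 + o(1) < 1/2$, establishing $L \in \POP^{<}$. Combining with the first direction gives $\POP^{<} = \classBPP$. The only nontrivial freedom lies in choosing the clock exponent $k$ large enough that the $\Omega(N^3)$ wait succeeds with probability $1 - n^{-\Omega(1)}$ per step; this is immediate from the polynomial tail bound on the leader--timer clock noted in the discussion of Table~\ref{table-predecessor-successor}.
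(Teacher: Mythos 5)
Your proposal is correct and follows essentially the same route as the paper: the upper bound via the uniformity check and Theorem~\ref{theorem-uniform-SWSTS-in-BPP}, and the lower bound via the tape-of-agents construction of Section~\ref{section-POP-ordered-BPP} with the predecessor/successor protocol of Table~\ref{table-predecessor-successor}, the single-timer clock for $\Omega(N^3)$ waits, and a union bound over the $n^c$ simulated steps. Your inductive observation that no off-side $\pm$ agents are ever created is exactly the paper's resolution of the pruning subtlety, so nothing is missing.
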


\subsubsection{Simulating a Turing machine in $\POP^=$}
\label{section-POP-unordered-BPP}

For this construction we will be particularly profligate in our use of
agents. Recall that $\POP$ with $n$ agents is capable of simulating,
with polynomial error per step, a probabilistic Turing machine with
$O(\log n)$ space~\cite{AngluinAE2008fast}. To implement a Turing
machine that finishes in $n^c$ steps in $\POP^{=}$, we will create
$n^c$ equivalence classes of $Θ(n^c)$ agents each, where each
equivalence class will implement a separate Turing machine capable of
storing the numerical index of a particular tape cell in addition to its
contents and a possible head state. As in the $\POP^{<}$ construction,
setting up these Turing machines and initializing the indices and
states can be done using an order-preserving log-space-computable input
map.

To simulate the full TM, whichever equivalence class $i$ of agents
currently holds the head computes the next step and updates the local
tape value appropriately. It must then find either its predecessor or
successor to hand off control of the head. This can be done by (a)
randomly choosing some other equivalence class as the candidate new
position, then (b) testing if the index of this other equivalence is
$i-1$ or $i+1$ as needed. If this test fails, try again.

Randomly choosing the candidate equivalence class is simply a matter
of having the leader in $i$ wait to encounter a leader in some other
equivalence class, which takes polynomially many expected
interactions. After marking this candidate leader for future
reference, testing if the other class has index $i±1$ can then be
done using the techniques of~\cite{AngluinAE2008fast} in polynomially
many expected interactions by having the leaders coordinate
decrementing a copy of their stored indices until one hits zero and
testing the value of the other copy. This incurs a substantial
overhead since the leaders must wait to communicate directly between each
decrement, but it's still polynomial. If the test fails, the candidate
leader is unmarked and a new candidate is chosen. This will find the
correct predecessor or successor class after a polynomial expected
number of attempts.

Error analysis is similar to the $\POP^{<}$ case: we just need to
set the polynomial probability of error at each step small enough
so that it is still polynomially small even when multiplied by the
large but still polynomial number of steps. This gives $\classBPP ⊆
\POP^{=}$, and using Theorem~\ref{theorem-uniform-SWSTS-in-BPP}
for the other direction gives
\begin{theorem}
    \label{theorem-BPP-unordered-POP}
    $\POP^{=} = \classBPP$.
\end{theorem}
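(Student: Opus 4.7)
The plan is to prove the two inclusions separately, treating $\POP^{=} \subseteq \classBPP$ as the easy direction (it follows immediately from Theorem~\ref{theorem-uniform-SWSTS-in-BPP} once we verify that $\POP^{=}$ protocols with a polynomial-time-computable input map are uniform, which works just as for $\POP^{<}$: configurations are sequences of agent states plus color-equality data, and checking $s \le t$ reduces to checking a fixed embedding pattern that respects equality classes). The substance is the forward inclusion $\classBPP \subseteq \POP^{=}$, which I would establish by describing an explicit simulation of a probabilistic Turing machine running in time $n^c$.

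The construction uses the input map to set up $n^c$ equivalence classes, each containing $\Theta(n^c)$ agents, together with a small constant number of utility agents analogous to those used in \S\ref{section-POP-ordered-BPP}. Within each equivalence class $i$, the agents implement the standard register-machine simulation of Angluin~\etal~\cite{AngluinAE2008fast}, which lets the class store its numerical index (of $\Theta(\log n)$ bits), the current tape symbol for cell $i$, and a flag indicating whether the TM head is located there. As in the $\POP^{<}$ case, this setup is done by an order-preserving log-space-computable input map that lays out the input symbols in the first cells, zeroes in the rest, and marks the index values and initial head class.

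The core subroutine is the head-handoff protocol. When the head is in class $i$, its leader carries out the simulated TM transition locally, flips the needed random coin as in \S\ref{section-POP-ordered-BPP}, updates the stored symbol, and then needs to pass the head token to class $i-1$ or $i+1$. To do this, the leader waits for an interaction with a leader of some other class (this sampling, as in~\cite{AngluinAE2008fast}, is essentially uniform over foreign classes after polynomially many interactions). The two leaders then run a comparison subprotocol: each makes a scratch copy of its stored index, they jointly decrement both copies in lockstep via direct communication, and whoever reaches zero first reveals the sign of the difference; they then compare the survivor's remaining value to $1$ to test whether the indices differ by exactly $1$ in the desired direction. If the test fails, the candidate is released and a new one is sampled. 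Each attempt succeeds with probability $\Omega(1/n^c)$ since there are $\Theta(n^c)$ candidates, so a polynomial expected number of attempts suffice.

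The main obstacle, as in the ordered case, is controlling the probability of error accumulated over $n^c$ simulated TM steps: each individual subroutine (coin flip, symbol update, candidate sampling, index comparison, head handoff) is implemented by the leader-based techniques of~\cite{AngluinADFP2006,AngluinAE2008fast}, which use timer-based waits that by Lemma~\ref{lemma-polynomial-path} can only succeed with a polynomially small per-step error. The standard fix is to choose the timer exponent $k$ in the clock of~\cite{AngluinADFP2006} large enough that the failure probability per simulated TM step is $O(n^{-c-d})$ for some constant $d$, so that a union bound over $n^c$ simulated steps leaves total error $O(n^{-d})$, which can be driven below any constant threshold. Combining this with the $\POP^{=} \subseteq \classBPP$ direction from Theorem~\ref{theorem-uniform-SWSTS-in-BPP} gives equality. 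I do not expect genuine surprises beyond bookkeeping; the only subtle point is making sure the comparison subprotocol is robust to the fact that repeated candidate tests may occasionally damage the stored index, which is handled by always decrementing scratch copies rather than the canonical index and by rejecting a candidate and resampling whenever the leaders detect an inconsistency.
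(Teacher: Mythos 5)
Your proposal matches the paper's argument essentially step for step: the easy direction via Theorem~\ref{theorem-uniform-SWSTS-in-BPP}, and the forward direction by an input map creating $n^c$ equivalence classes of $\Theta(n^c)$ agents each running the register-machine simulation of~\cite{AngluinAE2008fast} to store a tape-cell index, with head handoff by sampling a candidate class and comparing indices via lockstep decrementing of scratch copies, and the same union-bound error budget over $n^c$ simulated steps. No substantive differences.
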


\subsection{Simulating $\POP^A$ in other models}

To show that other SWSTS models can compute functions in $\classBPL$
or $\classBPP$, we simulate $\POP^A$. Typically this involves showing
that a model can carry out a loop where it repeatedly chooses two
agents to act as the initiator and responder in a $\POP$ interaction,
and uses whatever oracle is available to the model to simulate the
$\POP^A$ oracle.

How difficult this is will depend somewhat on how mismatched the
models are from $\POP$. The two main difficulties we may need to
overcome are that a model may not include the direct interactions
between agents that allows $\POP$ to update both agents' states
simultaneously, and that in models with indirect observation (like
$\GOSSIP$), more than one agent may observe a single agent. We start
(in §\ref{section-pop-in-gossip}) by showing how to overcome both
these issues when simulating $\POP^A$ in $\GOSSIP^A$. Simulations
by other models 
(described in subsequent sections) will allow simpler constructions.

\subsubsection{Simulating $\POP^A$ in $\GOSSIP^A$}
\label{section-pop-in-gossip}

Simulating one step of $\POP^A$ in $\GOSSIP^A$ will involve a sequence
of phases coordinated by a unique leader agent, which uses a unique
timer agent to implement a variant of the timer mechanism
of~\cite{AngluinADFP2006}. The remaining $n$ follower agents represent
agents in the $\POP^A$ protocol. Setup of the initial states of these
$n+2$ agents is handled by the input map.

To avoid confusion between the $\GOSSIP$ protocol and the protocol it
is simulating, we will describe the component of each $\GOSSIP$
agent's state that controls its behavior as its role.

Because $\GOSSIP$ allows many agents to pull data from a single agent,
it is not trivial to pick out a single initiator and responder in each
phase. Instead, the leader may recruit many candidates for these
roles, then apply a leader-election-like algorithm to remove
duplicates.

An additional complication is that the chosen initiator and responder
agents do not interact directly; instead, the responder will wait to
sample the initiator and calculate the new state of both agents, using
the gossip mechanism to communicate this change to the initiator. The
choice of having the responder do the computation is required by the
orientation of the edge oracle: under the convention that the oracle
reports $A(x,y)$ where $x$ is the transmitter of information, it is
only the responder that has access to the correct value of the oracle.

For the leader agent, the role will cycle through the values
\DS{recruit-initiator},
\DS{recruit-responder},
\DS{prune-extras},
\DS{update},
and
\DS{reset},
with the first roles being maintained for one step and the others for
an expected $Θ(n^k)$ steps 
by having the leader wait to see the
unique timer agent $k$ times in a row. An additional state component
in the leader is used to maintain the count of the current streak of
observations of the timer agent.

For follower agents, each agent maintains three components:
\begin{itemize}
    \item \DS{role} is its current role in the simulating
        protocol.
    \item $q$ is its state in the simulated $\POP$ protocol.
    \item $q'$ is an extra state component used only by a simulated
        initiator to communicate an updated state to a simulated
        responder.
\end{itemize}

In the initial configuration, each follower is in the $\DS{idle}$
role and has state $q$ equal to the state in $\POP^A$ chosen by the
input map.

Table~\ref{table-gossip-transitions} shows how these components
evolve in follower agents in response to observations of the leader
and each other.

\begin{table}
    \centering
    \begin{tabular}{lll}
        Role & Observed state & Update rule
        \\
        \hline
        \DS{idle} & \DS{recruit-initiator} &
        $\DS{role} ← \DS{initiator}$
        \\
        \hline
        \DS{idle} & \DS{recruit-responder} &
        $\DS{role} ← \DS{candidate-responder}$
        \\
        \DS{initiator} & \DS{recruit-responder} &
        $\DS{role} ← \DS{candidate-responder}$
        \\
        \hline
        \DS{initiator} & \DS{initiator}
        & $\DS{role} ← \DS{idle}$
        \\
        \DS{candidate-responder} & \DS{candidate-responder}
        & $\DS{role} ← \DS{idle}$
        \\
        \hline
        \DS{candidate-responder} & \DS{update}
        & $\DS{role} ← \DS{responder}$
        \\
        \DS{responder} & \DS{initiator}, $q_i$, $A$
        & $\DS{role} ← \DS{sending}$; $(q',q) ← δ(q_i,q,A)$
        \\
        \DS{initiator} & \DS{sending}, $q'$
        & $\DS{role} ← \DS{idle}$; $q ← q'$
        \\
        \hline
        \DS{initiator} & \DS{reset}
        & $\DS{role} ← \DS{idle}$
        \\
        \DS{candidate-responder} & \DS{reset}
        & $\DS{role} ← \DS{idle}$
        \\
        \DS{responder} & \DS{reset}
        & $\DS{role} ← \DS{idle}$
        \\
        \DS{sending} & \DS{reset}
        & $\DS{role} ← \DS{idle}$
        \\
    \end{tabular}
    \caption{Follower transitions for simulating $\POP^A$ in $\GOSSIP^A$}
    \label{table-gossip-transitions}
\end{table}

\begin{lemma}
    \label{lemma-gossip-pop}
    Fix $c ≥ 2$ and let $k=3c$.
    Then the protocol described above simulates $O(n^c)$ steps of a
    $\POP^A$ protocol with transition function $δ$ in expected time
    $O(n^{4c})$ with a probability
    of failure of $O(n^{-c})$.
\end{lemma}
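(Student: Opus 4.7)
The plan is to analyze the simulation one simulated $\POP^A$ step at a time, showing that each phase of the leader's cycle (\DS{recruit-initiator}, \DS{recruit-responder}, \DS{prune-extras}, \DS{update}, \DS{reset}) either achieves its intended effect with probability $1 - O(n^{-2c})$ per step or causes the whole simulation to fail. Taking a union bound over $O(n^c)$ simulated steps then gives overall failure probability $O(n^{-c})$, and multiplying the per-phase expected time by the number of simulated steps gives the overall expected running time.

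First I would pin down the behavior of the timer. Using the analysis cited from~\cite{AngluinADFP2006}, waiting to see the timer agent $k$ times in a row without seeing any other agent takes expected $O(n^{k+1})$ steps, and by a straightforward union bound over the $k$ streaks, the probability that the wait ends prematurely (in fewer than $\Omega(n^3)$ steps, say) is $O(n^{-(k-2)})$. Setting $k = 3c$ makes each timed phase have expected length $O(n^{3c+1})$ (dominating the one-step recruitment phases) and gives a polynomially small probability of the leader advancing the phase too early. Since each simulated step uses a constant number of timed phases, the expected time per simulated step is $O(n^{3c+1})$, and over $O(n^c)$ simulated steps the total is $O(n^{4c+1})$; I would absorb the $+1$ into the statement (or else sharpen the bound from the clock analysis).

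Next I would verify correctness of each phase conditional on the timer behaving. During \DS{recruit-initiator} and \DS{recruit-responder}, at least one follower almost surely adopts the target role, and the rules of Table~\ref{table-gossip-transitions} let two agents in the same role mutually cancel whenever they sample each other during \DS{prune-extras}. The $\Omega(n^3)$-step pruning phase is long enough, by the same cancellation analysis used in §\ref{section-POP-ordered-BPP}, to reduce each of the two roles to a unique survivor except with exponentially small probability. During \DS{update}, the responder samples the initiator within $O(n\log n)$ expected steps (with exponentially small probability of failure inside the $\Omega(n^{3c+1})$ window) and computes the new states using the edge oracle value $A(\text{initiator},\text{responder})$, which is consistent with the convention that the transmitter is the first argument. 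The initiator then receives $q'$ when it samples the responder, which again occurs within the phase with all but exponentially small probability. Finally, \DS{reset} returns every non-idle follower to \DS{idle}, since each such agent samples the leader within the phase with exponentially high probability.

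The main obstacle is the pruning step: we need both that the pruning reduces multiple initiators/responders to one, and that this is not spoiled by responders starting to act before the recruitment has stabilized or by leftover state from the previous simulated step. I would address this by using the ordering of the phases (the \DS{prune-extras} phase runs to completion before \DS{update} begins, and \DS{reset} runs before the next \DS{recruit-initiator}), arguing from the leader's state transitions that followers cannot misinterpret the observed leader state, and appealing to the timer's tail bound to conclude that out-of-order events occur with only polynomially small probability per simulated step. Adding these per-step error probabilities and combining with the timer failure bound gives a per-step error of $O(n^{-3c+O(1)})$; choosing $c$ large enough inside the timer yields an aggregate error of $O(n^{-c})$ over all $O(n^c)$ simulated steps, which is the stated bound.
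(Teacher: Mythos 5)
There is a genuine gap in your error accounting. You assume each phase ``achieves its intended effect with probability $1-O(n^{-2c})$ per step or causes the whole simulation to fail,'' and later that recruitment ``almost surely'' produces at least one agent in the target role and that pruning leaves a unique survivor ``except with exponentially small probability.'' Neither claim holds for this protocol. The \DS{recruit-initiator} and \DS{recruit-responder} phases each last a \emph{single} step, during which each of the $n$ followers independently samples the leader with probability $\frac{1}{n+2}$; the probability that at least one follower is recruited is $1-\bigl(1-\frac{1}{n+2}\bigr)^{n}\to 1-1/e$, a constant bounded away from $1$. Likewise, in \DS{prune-extras} the last surviving candidates can all observe one another in the same synchronous step and simultaneously drop out; conditioned on the step at which the count first could reach one, this happens with probability up to $1/2$, not exponentially small. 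There is also the possibility (constant times $O(1/n)$) that every initiator is overwritten by \DS{candidate-responder} in the second recruitment step. With these constant-probability failures present, your per-cycle error is $\Theta(1)$, and a union bound over $\Theta(n^c)$ cycles gives nothing.

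The missing idea is the dichotomy the paper's proof is built on: a failure must be classified as either \emph{global} (it corrupts the simulated $\POP^A$ execution --- e.g., two surviving responders both performing an update, or a timed phase ending before its sub-tasks complete) or \emph{local} (the cycle simply has no effect --- e.g., no initiator recruited, or all candidates cancelled). One then shows global failures occur with probability $O(n^{-2c})$ per cycle, so $O(n^{-c})$ over $\Theta(n^c)$ cycles, while local failures, which do occur with constant probability, merely force a retry and cost only a constant-factor slowdown in expectation. Your proof also needs this to justify the \DS{reset} phase, whose purpose is precisely to clean up after local failures. Separately, your timer bound imports the $O(n^{k+1})$ figure from the $\POP$ clock of~\cite{AngluinADFP2006}, but in $\GOSSIP$ the leader samples exactly one agent per step, so a streak of $k$ consecutive timer observations takes expected $\Theta(n^k)$ steps; this is why the total is $O(n^{c+k})=O(n^{4c})$ as stated, with no stray factor of $n$ to absorb.
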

\begin{proof}
    Let's look at what happens within one cycle consisting of five
    phases corresponding to the leader's roles.

    Each cycle can fail in one of two ways. A \concept{global failure}
    irrecoverably breaks the simulation: if a global failure occurs,
    the protocol is over and the simulation of $\POP^A$ has failed. We
    will show that global failures occur with probability only
    $O(n^{-2c})$ per cycle, or $O(n^{-c})$ across all $Θ(n^c)$
    simulated steps. A \concept{local failure} is less
    destructive and only prevents the current cycle from making
    progress. These may occur with constant probability per cycle and
    have the effect of slowing the simulation down by a constant
    factor on average.

    The phases in each cycle proceed as follows:
    \begin{enumerate}
        \item 
            During the single step where the leader has role
            \DS{recruit-initiator}, each of the
            $n$ follower agents has an independent $\frac{1}{n+2}$
            probability of
            observing this role and switching to
            \DS{initiator}.
            This gives a probability of
            $1-\parens*{1-\frac{1}{n+2}}^{n}$ that at least one agent
            switches to \DS{initiator}, which converges
            to the constant $1-1/e$ in the limit.

            This phase does not produce global failures but may
            produce a local failure if it ends in a configuration with
            no agents in role \DS{initiator}.
        \item 
            During the following step where the leader has role
            \DS{recruit-responder}, the same argument shows a
            constant probability of getting at least one agent in role
            \DS{candidate-responder}.

            The rule allowing $\DS{initiator}$ to be overwritten
            with $\DS{candidate-responder}$ exists only to make
            this argument possible. However, it has the unfortunate
            side effect of allowing all initiators from the previous
            phase to be overwritten. This event can only occur at each
            agent with probability $\frac{1}{(n+2)^2}$, since it
            requires the agent to observe the leader in both steps.
            The union bound then gives a probability of all initiators
            being overwritten of at most $\frac{n}{(n+2)^2} = O(1/n)$.
            Overwriting all initiators results in a local failure.

            This phase also cannot produce a global failure but it can,
            with at most constant probability, end
            in a state with no agents in role \DS{initiator} or
            no agents in role \DS{candidate-responder}. Either of
            these outcomes will be a local failure that causes the
            current cycle to have no effect.
        \item
            During the next $Θ(n^k)=Θ(n^{3c})$ expected steps where the leader
            has role \DS{prune-extras},
            as long as there is more than one
            \DS{candidate-responder}
            agent, each such agent has at least a $\frac{1}{n+2}$ chance
            per step of observing another agent in the same role and
            switching back to \DS{idle}.
            So for each candidate, after an expected $O(n)$ steps it
            either drops out or is the only remaining candidate.
            However, there are still several ways this
            process can go wrong:
            \begin{enumerate}
                \item
                    The actual time to remove excess candidates
                    exceeds $O(n^c)$. 
                    By Markov's inequality, each excess candidate has
                    at least a constant probability of dropping out
                    within each interval of $O(n)$ steps, so after
                    $O(n^2) = O(n^c)$ steps the
                    chance that any excess candidate survives is
                    exponentially small.

                    A failure of this kind will be treated as a global
                    failure.
                \item 
                    The time before reaching the
                    \DS{update} phase (and thus
                    switching a \DS{candidate-responder} agent to
                    \DS{responder}) is too short. Here we can use
                    a union bound to argue that the probability that
                    this occurs is $O(n^{c}⋅n^{-k}) = O(n^{-2c})$.

                    A failure of this kind will also be treated as
                    a global failure.
                \item 
                    The last remaining candidates observe each other
                    in the same step and all drop out together.
                    Let $X_t$ be the number of surviving candidates
                    after $t$ steps and let $τ$ be last step with $X_τ
                    > 1$.
                    Then
                    \begin{align*}
                        \Prob{X_{τ+1} = 0}
                        &=
                        ∑_{\ell=2}^{n}\Prob{X_τ=\ell}⋅\ProbCond{X_{τ+1} = 0}{X_τ=\ell}
                        \\
                        &=
                        ∑_{\ell=2}^{n}\Prob{X_τ=\ell}⋅
                        \parens*{
                            \frac{
                                \parens*{\frac{\ell-1}{n+2}}^\ell
                            }{
                                \parens*{\frac{\ell-1}{n+2}}^\ell + \ell
                                \parens*{\frac{\ell-1}{n+2}}^{\ell-1}
                                \parens*{1-\frac{\ell-1}{n+2}}
                            }
                        }
                        \\
                        &≤
                        ∑_{\ell=2}^{n}\Prob{X_τ=\ell}⋅\frac{1}{2}
                        \\
                        &=
                        \frac{1}{2}.
                    \end{align*}
                    The same analysis applies to agents in role
                    \DS{initiator}. Assuming that we start
                    the \DS{prune-extras} phase with at least one
                    agent in each of the roles
                    \DS{candidate-responder} and
                    \DS{initiator}, we get a probability
                    of $O(n^{-2c})$ that more than one agent remains
                    in each of these roles at the end of this phase,
                    and at least $1/2$ that at least one agent remains
                    in each role, giving a probability of
                    at least
                    $1/4$ that at least one agent remains in each
                    role.

                    This gives a global failure with probability
                    $O(n^{-2c})$ and a local failure with probability
                    at most $3/4$ conditioned on no prior failures in
                    this cycle.
            \end{enumerate}
        \item 
            During the next $Θ(n^k)=Θ(n^{-3c})$ expected steps, the leader
            has role \DS{update}. Let's suppose that previous
            phases have worked as intended, so we have exactly one
            agent $i$ in role $\DS{initiator}$
            and one agent $r$ in role $\DS{candidate-responder}$.

            Note that by symmetry, each ordered pair of follower
            agents is equally likely to be $(i,r)$. So the simulated
            process matches the uniform scheduling assumption of the
            $\POP^A$ model.

            We can consider three sequential sub-phases:
            \begin{enumerate}
                \item 
                    After an expected $O(n)$ steps, $r$ will observe
                    $\DS{update}$ and switch to
                    $\DS{responder}$.
                \item 
                    After an expected $O(n)$ additional steps, $r$ will
                    observe $i$, compute the output of the joint transition
                    function $δ(q_i, q_r, A(i,r))$, and update its own state
                    accordingly. The extra component $q'$ will store the
                    updated state for $i$, and the role of $r$ will
                    switch to $\DS{sending}$ to let $i$ know that
                    this value is current.
                \item 
                    After a further $O(n)$ expected steps, $i$ will observe
                    $r$ in role $\DS{sending}$, update its state to $q'$, and switch to role
                    $\DS{idle}$. At this point both $i$ and $r$ have the
                    same state they would have chosen in $\POP^A$.
            \end{enumerate}

            These three intervals sum to $O(n)$ expected steps. By the same
            argument as for the previous phase, the probability that
            the phase finishes before these intervals are complete is
            $O(n^{-2c})$.

            The case where there is more than one agent in role
            \DS{candidate-responder} or more than one agent in
            role \DS{initiator} will already have been counted as
            a global failure, so we do not need to worry
            about it again.

            This leaves the case where either $i$ or $r$ is missing.
            If only $i$ is missing, $r$ will never update its state in
            response to seeing $\DS{initiator}$ and so will end
            this subphase in either $\DS{candidate-responder}$
            or $\DS{responder}$; in either case it will not
            update its state.
            If only $r$ is missing, $i$ will never see an agent in role
            $\DS{sending}$ and thus will also perform no
            updates. If both are missing, nothing happens. All three
            cases are local failures.
        \item 
            During the final $Θ(n^k)$ expected steps where the leader
            has role \DS{reset}, any remaining agents in roles
            $\DS{candidate-responder}$, $\DS{responder}$,
            $\DS{sending}$, or
            $\DS{initiator}$ will each reset to $\DS{idle}$
            within $O(n)$ expected steps. In a successful phase this
            will only be one agent in role $\DS{sending}$, but the
            rules for this phase will also clean up any leftover
            agents in the other roles if a previous phase produced a
            local failure.

            Assuming that no global failure previously occurred,
            there is at most one agent in each of the target
            roles, and the probability of failure of this phase is $O(n^{-2c})$.
    \end{enumerate}

    We can sum up all the global-failure probabilities with a union
    bound to get an $O(n^{-2c})$ probability per cycle of a global
    failure. For local failures, each phase or subphase avoids a
    local failure with at least constant probability conditioned on
    no previous failures; multiplying out these conditional
    probabilities gives a constant probability that there is no local
    failure anywhere in the cycle.

    Local failures are just missed steps, so they only affect the constant in
    the rate of the simulation. It follows that simulating $Θ(n^{c})$ steps of
    $\POP^A$ requires only $Θ(n^c)$ cycles of the $\GOSSIP^A$
    protocol, which corresponds to $O(n^{c+k}) = O(n^{4c})$ steps.
    The total probability of a global failure over these $Θ(n^c)$ cycles is
    $O(n^c)⋅O(n^{-2c}) = O(n^{-c})$.
\end{proof}

It is not hard to see that each of $\GOSSIP^<$ and $\GOSSIP^=$
satisfies the uniformity conditions of
Theorem~\ref{theorem-uniform-SWSTS-in-BPP}, which puts both of these
classes in $\classBPP$. We can also combine
Lemma~\ref{lemma-gossip-pop} with the simulations in
Theorems~\ref{theorem-BPP-ordered-POP}
and~\ref{theorem-BPP-unordered-POP} to show the reverse inclusion.
This gives:
\begin{theorem}
    \label{theorem-BPP-GOSSIP}
    $\GOSSIP^< = \GOSSIP^= = \classBPP$.
\end{theorem}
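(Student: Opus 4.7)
The plan is to establish the two containments $\GOSSIP^A \subseteq \classBPP$ and $\classBPP \subseteq \GOSSIP^A$ separately, for $A \in \{<, =\}$, by invoking the machinery already developed earlier in the paper.

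For the upper bound, I would verify that $\GOSSIP^<$ and $\GOSSIP^=$ meet the three uniformity conditions of Theorem~\ref{theorem-uniform-SWSTS-in-BPP}. A configuration is encoded as a length-$n$ sequence of agent states augmented with the hidden ordered keys (for $\GOSSIP^<$) or color-class identifiers (for $\GOSSIP^=$) needed by the edge oracle, so $\card{\hat s}$ is linear in $\card s$. Sampling one step is straightforward: for each of the $n$ agents, independently draw another agent from the remaining $n-1$, read its state and the value of the edge oracle on that pair, and apply $\delta$. The usual caveat about exponentially small relative error when simulating a uniform draw from a non-power-of-two set is already accommodated by the definition of uniformity, as noted in the discussion of $\POP^<$. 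Testing $s \le t$ for fixed $s$ is a sequence-embedding check on the state/label tuples, doable by a finite-state scan in polynomial time. These facts, plugged into Theorem~\ref{theorem-uniform-SWSTS-in-BPP}, yield both upper bounds.

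For the lower bound, I would compose two existing simulations. Theorems~\ref{theorem-BPP-ordered-POP} and~\ref{theorem-BPP-unordered-POP} give $\classBPP \subseteq \POP^<$ and $\classBPP \subseteq \POP^=$, exhibiting protocols that simulate a probabilistic polynomial-time Turing machine with polynomially small error. Lemma~\ref{lemma-gossip-pop} then shows that any $O(n^c)$-step $\POP^A$ execution can in turn be simulated in $\GOSSIP^A$ with $O(n^{-c})$ global failure probability over $O(n^{4c})$ steps. Chaining the two constructions produces a $\GOSSIP^A$ protocol that simulates the same Turing machine, placing every $\classBPP$ language in both $\GOSSIP^<$ and $\GOSSIP^=$.

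The main obstacle is careful bookkeeping on the error budget across the two simulation layers. I would fix the polynomial bound $n^c$ on the TM's running time first, choose the constant in Lemma~\ref{lemma-gossip-pop} large enough that the compounded failure probability (TM-in-$\POP^A$ error plus $\POP^A$-in-$\GOSSIP^A$ error plus the $O(1/n)$ truncation slack allowed by Theorem~\ref{theorem-uniform-SWSTS-in-BPP}) is bounded well below the $\classBPP$ error threshold, and then verify that the composed input map remains order-preserving and polynomial-time-computable. The latter follows because the $\POP^A$ input maps of Theorems~\ref{theorem-BPP-ordered-POP} and~\ref{theorem-BPP-unordered-POP} have these properties, and Lemma~\ref{lemma-gossip-pop} only prepends a constant number of utility agents (leader, timer) whose initial states do not depend on the input.
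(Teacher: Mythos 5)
Your proposal is correct and follows essentially the same route as the paper: the upper bound by verifying the uniformity conditions of Theorem~\ref{theorem-uniform-SWSTS-in-BPP}, and the lower bound by composing Theorems~\ref{theorem-BPP-ordered-POP} and~\ref{theorem-BPP-unordered-POP} with the $\POP^A$-in-$\GOSSIP^A$ simulation of Lemma~\ref{lemma-gossip-pop}. Your additional remarks on compounding the error budgets and on the composed input map only make explicit what the paper leaves implicit.
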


With no oracle, Lemma~\ref{lemma-gossip-pop} shows that $\POP ⊆
\GOSSIP$. Combined with
Theorem~\ref{theorem-log-space-uniform-SWSTS-in-BPL}, this shows
that $\GOSSIP$ computes the symmetric languages in $\classBPL$.

\subsubsection{Simulating $\POP^A$ in $\PUSH^A$ or $\SHUFFLE^A$}

In $\PUSH^A$, each agent is only visible to one other agent in each
step. This allows for a simple simulation of $\POP^A$ based on passing
a single active token around the agents. We will still
need a leader agent, but its role is limited to being a neutral
arbiter that selects an initiator uniformly in each cycle, avoiding
bias that might otherwise arise if the initiator were selected by an
agent chosen for some role in a previous cycle. Essentially the same
algorithm works in $\SHUFFLE^A$ by treating the simulated token as an
explicit token, but here we also have to handle the case where an
agent receives the token it sent.

The transition table for this simulation is given in
Table~\ref{table-pop-with-token-passing}. Each group of transitions in the table
implements movement of the active token, which is represented by a
role $\DS{start}$, $\DS{initiator}$, $\DS{wait+restart}$,
$\DS{restart}$, $\DS{send}$, $\DS{wait+send}$,
or $\DS{done}$.
The additional roles $\DS{wait}$ and $\DS{receive}$ mark agents
that have a special role that affects how they process an incoming
token. Each agent also stores its simulated state $q$ and an optional extra
state $q'$ to be transmitted to the simulated initiator for this
cycle.

A transition is controlled by the current role of the agent and the
active token (if any) that it observes. In the $\PUSH^A$ model, this
observation consists of receiving a push. In the $\SHUFFLE^A$ model,
this observation consists of receiving a token. In the latter case we
assume that only the agent with the active token emits a non-null
token, while the other agents emit default null tokens. In the
$\PUSH^A$ model the agents cannot choose whether to emit a push or
not, but we assume that pushes of roles that do not correspond to the
active token are ignored.

We omit from the table transitions where an agent observes the same
role that it already has; in this case, the agent does not update its
role and tries again in the next step. These transitions are only an
issue in the $\SHUFFLE^A$ model.

\begin{table}
    \begin{displaymath}
        \begin{array}{cccc}
            \text{Current role} & \text{Observed} & \text{Update rule} & \text{New role}\\
            \hline
            \DS{start} & & & \DS{wait} \\
            \DS{idle} & \DS{start} & & \DS{initiator}(q) \\
            \hline
            \DS{initiator} & & & \DS{receive}
            \\
            \DS{wait} & \DS{initiator}(q_i) & & \DS{wait+restart} \\
            \DS{idle} & \DS{initiator}(q_i),A & 
            (q,q') ← δ(q_i,q,A) & \DS{send}(q')
            \\
            \hline
            \DS{wait+restart} & & & \DS{wait} \\
            \DS{idle} & \DS{wait+restart} & & \DS{restart} \\
            \DS{receive} & \DS{wait+restart} & & \DS{initiator}(q) \\
            \hline
            \DS{restart} & & & \DS{idle} \\
            \DS{idle} & \DS{restart} & & \DS{restart} \\
            \DS{wait} & \DS{restart} & & \DS{wait+restart} \\
            \DS{receive} & \DS{restart} & & \DS{initiator}(q) \\
            \hline
            \DS{send}(q') & & & \DS{idle}
            \\
            \DS{idle} & \DS{send}(q') & & \DS{send}(q')
            \\
            \DS{wait} & \DS{send}(q') & & \DS{wait+send}(q')
            \\
            \DS{receive} & \DS{send}(q') & q ← q' & \DS{done} \\
            \hline
            \DS{wait+send}(q') & & & \DS{wait}
            \\
            \DS{idle} & \DS{wait+send}(q') & & \DS{send}(q')
            \\
            \DS{receive} & \DS{wait+send}(q') & q ← q' & \DS{done} \\
            \hline
            \DS{done} & & & \DS{idle} \\
            \DS{wait} & \DS{done} & & \DS{start} \\
        \end{array}
    \end{displaymath}
    \caption{Simulating $\POP^A$ by token passing}
    \label{table-pop-with-token-passing}
\end{table}

In the initial configuration, the leader agent has role \DS{start} and
the follower agents have role \DS{idle}, with each follower agent's
$q$ field set to the state of the $\POP^A$ agent it is simulating.

A normal cycle of the protocol simulates one step of $\POP^A$ and
proceeds as follows:
\begin{enumerate}
    \item The leader transmits its $\DS{start}$ token to a follower
        that adopts the role $\DS{initiator}$, making it the initiator
        of the simulated step. The leader switches to the $\DS{wait}$
        role until it is notified that the cycle is complete.
    \item The initiator transmits its $\DS{initiator}$ token to a
        simulated responder, which uses the accompanying state and the
        output of the oracle $A$ to compute the transition relation
        $(q',q) ← δ(q_i,q,A)$ for the simulated population protocol.
        The responder enters a $\DS{send}(q')$ role to transmit the
        updated initiator state back to the initiator, while the
        initiator switches to the $\DS{receive}$ role to await
        receiving this state.
    \item The $\DS{send}(q')$ token wanders through the population
        (possibly switching temporarily to a $\DS{wait+send}(q')$
        token if held by the leader) until it reaches the initiator.
        At this point the initiator updates its state and emits a
        $\DS{done}$ token.
    \item The $\DS{done}$ token similarly wanders through the
        population until it reaches the leader and starts the next
        cycle by switching the leader's role to $\DS{start}$.
\end{enumerate}

It is not hard to see that a normal cycle of the protocol chooses each
possible initiator and responder with equal probability
$\frac{1}{n(n+1)}$ and that it correctly updates the states of both
agents before starting the next cycle.

A complication is that if the $\DS{initiator}$ token passes back to
the leader instead of to another agent, the leader cannot act as a
responder, so it needs to restart the initiator-responder interaction.
It does so by switching to a $\DS{wait+restart}$ role that propagates
through the population as either $\DS{wait+restart}$ or $\DS{restart}$
until it reaches the previously selected initiator, which switches
back from $\DS{receive}$ to $\DS{initiator}$.

The classes of configurations of the protocol and the transitions
between them are shown in
Figure~\ref{fig-state-machine-for-pop-with-tokens}. Each class is
labeled by the role of the agent with the active token first, followed
by any other non-\DS{idle} roles that are present in the population.
Each transition in the chain occurs with probability at least $\frac{1}{n+1}$,
so the $\DS{start}$ configurations are revisited
infinitely often with probability $1$, which implies that the
simulated $\POP^A$ protocol takes infinitely many steps. With a bit of
work it is also possible to show that the expected slowdown is $O(n)$,
mostly incurred while waiting for the $\DS{send}$ or $\DS{done}$
tokens to reach the right agent.
\begin{figure}
    \centering
    \begin{tikzpicture}
        \node[draw,rectangle] (start) at (-0.5,1) { \DS{start} };
        \node[draw,rectangle] (initiator) at (2,0) { 
        \begin{tabular}{c}
            \DS{initiator}$(q)$ \\
            \DS{wait}
        \end{tabular}
        };
        \node[draw,rectangle] (send) at (5,0.5) { 
        \begin{tabular}{c}
            \DS{send}$(q')$ \\
            \DS{wait} \\
            \DS{receive}
        \end{tabular}
        };
        \node[draw,rectangle] (ws) at (5,2.5) { 
        \begin{tabular}{c}
            \DS{wait+send}$(q')$ \\
            \DS{receive}
        \end{tabular}
        };
        \node[draw,rectangle] (done) at (2,2) { 
        \begin{tabular}{c}
            \DS{done} \\
            \DS{wait}
        \end{tabular}
        };
        \node[draw,rectangle] (wr) at (0.5,-2.5) { 
        \begin{tabular}{c}
            \DS{wait+restart} \\
            \DS{receive}
        \end{tabular}
        };
        \node[draw,rectangle] (restart) at (3.5,-2.5) { 
        \begin{tabular}{c}
            \DS{restart} \\
            \DS{wait} \\
            \DS{receive}
        \end{tabular}
        };
        \draw[->,thick,blue] (start) edge (initiator);
        \draw[->,thick,blue] (initiator) edge (send);
        \draw[->,thick,blue] (send) edge (done);
        \draw[->,thick,blue] (done) -- (start);
        \draw[<->] (initiator) -- (wr);
        \draw[<->] (wr) -- (restart);
        \draw[->] (restart) -- (initiator);
        \draw[<->] (ws) -- (send);
        \draw[->] (ws) -- (done);
    \end{tikzpicture}
    \caption{Transition diagram of simulation of $\POP^A$ in $\PUSH^A$ or
    $\SHUFFLE^A$. Typical execution cycle is highlighted in blue.}
    \label{fig-state-machine-for-pop-with-tokens}
\end{figure}
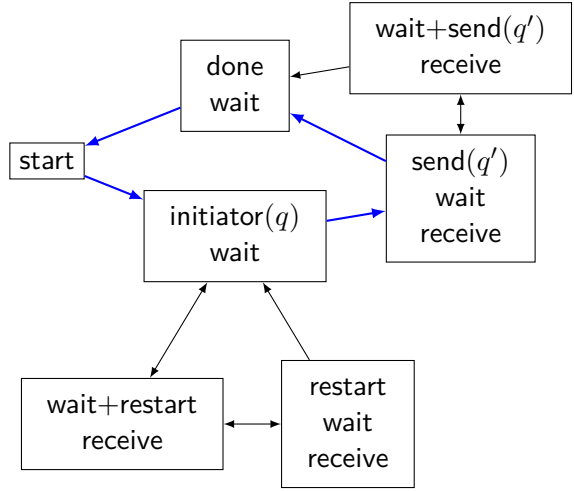

Unlike the $\GOSSIP^A$ simulation, this simulation experiences no
unrecoverable errors,
and can continue to simulate $\POP^A$ in $\PUSH^A$ or $\SHUFFLE^A$
indefinitely. We can express this as:
\begin{lemma}
    \label{lemma-pop-in-push-or-shuffle}
    For any edge oracle $A$,
    $\POP^A ⊆ \PUSH^A$ and $\POP^A ⊆ \SHUFFLE^A$.
\end{lemma}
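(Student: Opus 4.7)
The plan is to construct, for each $\POP^A$ protocol $P$, a simulating protocol $P'$ in $\PUSH^A$ (respectively $\SHUFFLE^A$) that follows the explicit token-passing scheme described in Table~\ref{table-pop-with-token-passing} and Figure~\ref{fig-state-machine-for-pop-with-tokens}. The input map for $P'$ produces a configuration consisting of one leader agent in role $\DS{start}$ together with $n$ follower agents in role $\DS{idle}$ whose $q$-components are set from $P$'s input map; the output sets $V_0, V_1$ for $P'$ are inherited by marking configurations that contain a follower whose simulated state lies in the corresponding output set of $P$.

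The main correctness claim is that each completed cycle in the state machine of Figure~\ref{fig-state-machine-for-pop-with-tokens} selects a uniformly random ordered pair of distinct followers $(i,r)$ and applies $\delta(q_i, q_r, A(i,r))$ to their simulated states, thereby realizing the $\POP^A$ scheduling distribution. I would establish this from the symmetry of the push/shuffle models: the delivery of the unique active token picks a uniformly random recipient, so conditioning on the recipient being a follower (with the leader case triggering the $\DS{wait+restart}/\DS{restart}$ branch to reset to an earlier state) picks the initiator uniformly, and an identical argument for the $\DS{initiator}$ token picks the responder uniformly among the remaining followers. The oracle orientation is correct because the responder holds both its own state and the transmitted initiator state when it computes $\delta$, and thus invokes $A(i,r)$ with the transmitter first, matching the $\POP^A$ convention.

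For liveness, each transition arrow in Figure~\ref{fig-state-machine-for-pop-with-tokens} fires with probability at least $\Omega(1/n)$ per step, since it corresponds to the unique active token being delivered to a specific agent; the induced Markov chain on configuration classes therefore returns to a $\DS{start}$ configuration with probability $1$ after $O(n)$ expected steps, so $P'$ almost surely simulates every step of $P$. The main obstacle is a careful treatment of the $\SHUFFLE^A$ variant, where, beyond the explicitly omitted same-role no-ops, an active token may be received by the leader while in $\DS{wait}$ or by some other agent whose current role is not listed as a source for the observed token. I would resolve this by adopting the convention that any unlisted (state, observation) pair is a no-op in which the active token remains with its current holder and is re-emitted on the next step, which amounts to self-loops on the relevant configuration classes of Figure~\ref{fig-state-machine-for-pop-with-tokens}; because token re-emission does not depend on any agent's $q$-component, the symmetry argument for uniform $(i,r)$-selection goes through unchanged and the chain still advances with probability $\Omega(1/n)$ per step, completing the containment.
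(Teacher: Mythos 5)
Your proposal is correct and follows essentially the same route as the paper: the same leader-arbitrated token-passing construction of Table~\ref{table-pop-with-token-passing}, the same symmetry argument for uniform selection of the initiator--responder pair with the $\DS{wait+restart}$/$\DS{restart}$ branch handling the token returning to the leader, and the same $\Omega(1/n)$-per-transition liveness argument giving an expected $O(n)$ slowdown with no unrecoverable errors. Your explicit no-op convention for unlisted role/observation pairs in $\SHUFFLE^A$ is a slightly more careful statement of what the paper handles by fiat, but it does not change the argument.
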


It follows that
\begin{theorem}
    \label{theorem-BPP-PUSH}
    $\PUSH^< = \PUSH^= = \SHUFFLE^< = \SHUFFLE^= = \classBPP$,
    assuming $\PUSH^<$ and $\PUSH^=$ are restricted to 
    polynomial-time-computable transition functions.
\end{theorem}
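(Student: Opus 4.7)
The plan is to handle the two inclusions separately and then combine them, mirroring the structure used in Theorem~\ref{theorem-BPP-GOSSIP}.

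For the upper bound $\PUSH^< \subseteq \classBPP$ and the three analogous containments, I would appeal to Theorem~\ref{theorem-uniform-SWSTS-in-BPP}, which reduces the task to verifying the three uniformity conditions for each of the four models. The representation of a configuration as a sequence of agent states over $Q$ augmented with the hidden color or key has size linear in the number of agents, giving condition (a). For condition (c), checking whether a fixed basis element $s$ embeds into $\hat{t}$ is just a subsequence search that additionally verifies equality or ordering of the keys, which runs in polynomial time for both $<$ and $=$ oracles. Condition (b), polynomial-time sampling of one transition, is where the four models differ slightly: for $\SHUFFLE^<$ and $\SHUFFLE^=$, we sample a random permutation of the $nk$ tokens and chop it into blocks of size $k$, then apply the per-agent transition (which has constant arity, hence constant-time description); for $\PUSH^<$ and $\PUSH^=$, we sample for each agent an independent uniform target, record the multiset of incoming pushes at each agent (along with the oracle value for each push), then apply the transition function, which we are assuming is polynomial-time-computable in the size of the observed multiset. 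The only subtlety is the standard one of sampling uniformly from $\{1,\dots,n\}$ using random bits; as noted in the discussion following Theorem~\ref{theorem-uniform-SWSTS-in-BPP}, this is absorbed into the $1-o(n^{-c})$ slack allowed by uniformity.

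For the lower bound, I would directly invoke Lemma~\ref{lemma-pop-in-push-or-shuffle}, which states $\POP^A \subseteq \PUSH^A$ and $\POP^A \subseteq \SHUFFLE^A$ for any edge oracle $A$. Specialising $A$ to $<$ and $=$ and chaining with Theorems~\ref{theorem-BPP-ordered-POP} and~\ref{theorem-BPP-unordered-POP}, which give $\POP^< = \POP^= = \classBPP$, immediately yields $\classBPP \subseteq \PUSH^<$, $\classBPP \subseteq \PUSH^=$, $\classBPP \subseteq \SHUFFLE^<$, and $\classBPP \subseteq \SHUFFLE^=$. A small bookkeeping remark is that the input maps used in Theorems~\ref{theorem-BPP-ordered-POP} and~\ref{theorem-BPP-unordered-POP} are log-space (hence polynomial-time) computable, so composing them with the token-passing simulation of Lemma~\ref{lemma-pop-in-push-or-shuffle} still produces a polynomial-time-computable input map, as required by the definition of a uniform protocol and by the hypothesis of the theorem.

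The step I expect to need the most care is verifying condition (b) for $\PUSH^<$ and $\PUSH^=$: unlike in $\GOSSIP^A$ or $\POP^A$, the view of a single agent in $\PUSH^A$ can contain up to $n$ states, so the multiset of observed states together with oracle annotations has size $\Theta(n)$ per agent and $\Theta(n^2)$ total. The explicit polynomial-time-computability hypothesis on the transition function is precisely what is needed to keep the per-step simulation polynomial, and this is exactly why the theorem statement carries that restriction for $\PUSH$ but not for $\SHUFFLE$ (whose per-agent view has constant size $k$). Once this is spelled out, the four equalities follow by combining the two directions in the obvious way.
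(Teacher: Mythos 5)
Your proposal is correct and follows essentially the same route as the paper: the upper bound via the uniformity conditions of Theorem~\ref{theorem-uniform-SWSTS-in-BPP} (with the polynomial-time-computable transition function hypothesis doing exactly the work you identify for $\PUSH$), and the lower bound by chaining Lemma~\ref{lemma-pop-in-push-or-shuffle} with Theorems~\ref{theorem-BPP-ordered-POP} and~\ref{theorem-BPP-unordered-POP}. The paper states the result as an immediate corollary of these ingredients; your write-up simply makes the same argument explicit.
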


As with $\GOSSIP$, a similar reduction to $\POP$ shows that with no
oracle and appropriate restrictions, $\PUSH$ and $\SHUFFLE$ accept
precisely the symmetric languages in $\classBPL$.

\subsubsection{Simulating $\POP^A$ in $\MATCHING^A$}

It is possible to simulate $\POP^A$ in $\MATCHING^A$ using the same
token-passing approach as in $\PUSH^A$ or $\SHUFFLE^A$, but there is a
simpler simulation that takes advantage of the direct interactions
between agents.

A transition rule for this simulation is given in
Table~\ref{table-MATCHING-POP}.
The idea is to use a single leader agent to randomly mark an
initiator and responder for each simulated step. These agents then
interact directly, marking themselves as done. The leader collects
these marks before
choosing a new initiator and responder.
Because the protocol requires no timing assumptions, no separate timer agent is
needed.

\begin{table}
    \begin{align*}
        \ell_0, \DS{idle}(q) &→ \ell_1, \DS{initiator}(q) \\
        \ell_1, \DS{idle}(q) &→ \ell_2, \DS{responder}(q) \\
        \DS{initiator}(q_1), \DS{responder}(q_2), A &→
        \DS{done}(q'_1), \DS{done}(q'_2) \quad \text{where
        $q' = δ(q_1,q_2,A)$} \\
        \ell_2, \DS{done}(q) &→ \ell_3, \DS{idle}(q) \\
        \ell_3, \DS{done}(q) &→ \ell_0, \DS{idle}(q)
    \end{align*}
    \caption{Transition rules for simulating $\POP^A$ in $\MATCHING^A$}
    \label{table-MATCHING-POP}
\end{table}

It is straightforward to demonstrate that this simulates a protocol in
$\POP^A$ with no error and $O(n^2)$ slowdown.

Combining this simulation with our earlier results gives:
\begin{theorem}
    \label{theorem-BPP-MATCHING}
    $\MATCHING^< = \MATCHING^= = \classBPP$.
\end{theorem}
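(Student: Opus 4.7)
The plan is to prove both equalities by establishing each inclusion separately, reusing infrastructure already developed in the paper.

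For the lower bounds $\classBPP \subseteq \MATCHING^{<}$ and $\classBPP \subseteq \MATCHING^{=}$, I would compose the simulation sketched in Table~\ref{table-MATCHING-POP} with Theorems~\ref{theorem-BPP-ordered-POP} and~\ref{theorem-BPP-unordered-POP}. Concretely, I would verify that a full cycle $\ell_0 \to \ell_1 \to \ell_2 \to \ell_3 \to \ell_0$ of the leader corresponds to exactly one $\POP^A$ step: transition $\ell_0 \to \ell_1$ picks an initiator uniformly among the $n-1$ non-leader agents because in a uniform random directed matching the leader's partner is uniform; transition $\ell_1 \to \ell_2$ eventually selects a responder, and by symmetry over non-leader, non-initiator agents the responder is uniform on those $n-2$ candidates; once both roles are in place the initiator--responder pair gets matched together with probability $\Theta(1/n)$ per step, triggering the joint update via $\delta(q_1, q_2, A)$ with the correct oracle orientation; the two $\DS{done}$ marks are then collected by the leader in $O(n)$ further expected matching steps. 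This yields an error-free simulation of $\POP^A$ with polynomial (in fact $O(n)$) slowdown, so $\POP^A \subseteq \MATCHING^A$ for $A \in \{<,=\}$, and then Theorems~\ref{theorem-BPP-ordered-POP} and~\ref{theorem-BPP-unordered-POP} give the lower bounds.

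For the upper bounds $\MATCHING^{<} \subseteq \classBPP$ and $\MATCHING^{=} \subseteq \classBPP$, I would invoke Theorem~\ref{theorem-uniform-SWSTS-in-BPP} after checking that both $\MATCHING^{<}$ and $\MATCHING^{=}$ are uniform closed SWSTSs. A configuration can be represented as the list of agent states together with the hidden keys (for $<$) or colors (for $=$), which has size linear in the number of agents; sampling a transition in polynomial time amounts to drawing a uniform random permutation, splitting it into directed pairs, and applying $\delta$ with the oracle value computed by a single key comparison or equality test; and $s \leq t$ for a fixed basis element $s$ reduces to detecting an ordered state- and oracle-preserving embedding into $t$, which is polynomial-time checkable. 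Combined with any polynomial-time-computable input map, this satisfies the hypotheses of Theorem~\ref{theorem-uniform-SWSTS-in-BPP}.

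The main obstacle I expect is the symmetry argument establishing that the conditional distribution of the responder in step $\ell_1 \to \ell_2$ is uniform over the $n-2$ non-leader, non-initiator agents despite possibly many wasted matching steps in which the leader is re-paired with the initiator. This is a standard argument but needs to be stated carefully, since if the distribution were biased the simulation would not reproduce $\POP^A$'s uniform scheduling, and the reductions from $\POP^{<}$ and $\POP^{=}$ would break. Everything else (the oracle-orientation bookkeeping, the $O(n)$ slowdown bound from matching probabilities of $1/(n-1)$, and the uniformity checks) is routine.
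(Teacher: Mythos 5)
Your proposal matches the paper's argument: the lower bound comes from the leader-coordinated simulation of $\POP^A$ in Table~\ref{table-MATCHING-POP} combined with Theorems~\ref{theorem-BPP-ordered-POP} and~\ref{theorem-BPP-unordered-POP}, and the upper bound from verifying that $\MATCHING^<$ and $\MATCHING^=$ are uniform closed SWSTSs and applying Theorem~\ref{theorem-uniform-SWSTS-in-BPP}; the symmetry point you flag (uniformity of the responder conditioned on the leader's partner being idle) is exactly the step the paper leaves as ``straightforward.'' The only discrepancy is your $O(n)$ slowdown versus the paper's $O(n^2)$, which is just a matter of counting matching rounds versus pairwise interactions and is immaterial since both are polynomial.
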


We also have that $\MATCHING$ accepts the symmetric languages in
$\classBPL$, assuming a log-space computable input map.

\subsubsection{Simulating $\POP^<$ in $\BETWEEN$}

Simulating $\POP^<$ in $\BETWEEN$ is straightforward, because we use
the input map to choose a graph in the form of a path, where each
agent's position on the path corresponds to the total order $<$.

Given a $\POP^<$ protocol using $n$ agents with transition function
$δ$, construct a $\BETWEEN$ protocol with $n+2$ agents by extending
the input map to add new
agents $L$ and $R$ and the left and right ends of the $\POP^<$
configurations, and organizing these nodes into a path according to
the total order $<$.

Recall that any
transition in $\BETWEEN$ involves a triple of agents $\Tuple{s,t,u}$
where $t$ is on a simple path from $s$ to $u$. Let a transition
involving any such triple
that does not include exactly one of $L$ or $R$ have no effect. For a
triple $\Tuple{L,a,b}$, update the states of $a$ and $b$ to
$δ(a,b,<)$. For a triple $\Tuple{a,b,R}$, update the state of $a$ and
$b$ to $δ(b,a,>)$. Conditioned on choosing a transition that includes
exactly one of $L$ or $R$,
each of the $n(n-1)$ ordered pairs of agents in the simulated
population protocol are equally likely.
So aside from an expected $O(n)$-factor slowdown from ineffective
transitions, the resulting $\BETWEEN$ protocol carries out precisely
the same transitions as the original $\POP^<$ protocol.

A similar approach can be used to simulate $\POP^=$ by separating the
equivalence classes by barrier tokens $|$, using transitions with
inputs $\Tuple{a,|,b}$ to detect when $a$ and $b$ are not in the same
class, and testing probabilistically if $a$ and $b$ are in the same
class by waiting long enough without seeing such a transition. But the
details are tedious, and the containment $\POP^< ⊆ \BETWEEN$ is
already enough to get:
\begin{theorem}
    \label{theorem-BPP-BETWEEN}
    $\BETWEEN=\classBPP$.
\end{theorem}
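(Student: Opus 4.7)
The plan is to establish both containments $\BETWEEN ⊆ \classBPP$ and $\classBPP ⊆ \BETWEEN$, using the $\POP^<$-in-$\BETWEEN$ construction sketched immediately before the theorem statement together with the previously-established results $\POP^< = \classBPP$ (Theorem~\ref{theorem-BPP-ordered-POP}) and the uniform simulation result (Theorem~\ref{theorem-uniform-SWSTS-in-BPP}).

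For the lower bound $\classBPP ⊆ \BETWEEN$, I would formalize the construction described just above the theorem. Given a $\POP^<$ protocol with $n$ agents, the input map builds a path graph on $n+2$ vertices whose interior nodes, in path order, hold the states of the $\POP^<$ agents, while the two distinguished endpoints $L$ and $R$ act as orientation markers. The transition rule makes every triple not containing exactly one of $\{L,R\}$ inert; a triple $\Tuple{L,a,b}$ fires $δ(s_a,s_b,<)$ on $(a,b)$, and a triple $\Tuple{a,b,R}$ fires $δ(s_b,s_a,>)$. Since each ordered pair $(a,b)$ of interior agents arises from a unique useful triple for each choice of endpoint, the induced distribution on ordered pairs is uniform, and the resulting slowdown over $\POP^<$ is only a polynomial factor coming from the inert triples. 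Invoking Theorem~\ref{theorem-BPP-ordered-POP} then gives $\classBPP = \POP^< ⊆ \BETWEEN$.

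For the upper bound $\BETWEEN ⊆ \classBPP$, I would verify the uniformity conditions of Theorem~\ref{theorem-uniform-SWSTS-in-BPP}. A configuration is a labeled graph, which has a polynomial-sized representation (adjacency list plus labels). Sampling one step reduces to enumerating triples $\Tuple{s,t,u}$ satisfying the betweenness condition (doable in polynomial time by BFS from $s$ restricted to not passing through $t$ in a prescribed way, or by simply checking that $t$ lies on some shortest path extension) and then applying the constant-size $δ$. The one subtle uniformity requirement is deciding $s ≤ t$ for fixed $s$, which is an instance of labeled-graph immersion; this is the limitation flagged in the remarks preceding the theorem. Under the convention that each $V_i$ tests only for the presence of an agent in a designated output state, this membership test reduces to scanning the vertex labels, which is trivially polynomial time.

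The main obstacle is precisely this last uniformity issue: without the output convention, polynomial-time checking of labeled-graph immersion in full generality is not known, and our invocation of Theorem~\ref{theorem-uniform-SWSTS-in-BPP} would be in jeopardy. Consequently, the theorem should be understood relative to the standard output convention discussed earlier in the section; under this reading, the two inclusions combine to give the stated equality, and no additional machinery beyond the cited theorems and the path-embedding simulation is required.
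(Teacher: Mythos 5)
Your proposal matches the paper's argument essentially exactly: the lower bound via the path-with-endpoints simulation of $\POP^<$ combined with Theorem~\ref{theorem-BPP-ordered-POP}, and the upper bound via the uniformity conditions of Theorem~\ref{theorem-uniform-SWSTS-in-BPP} under the single-agent output convention that sidesteps the labeled-graph-immersion issue. One minor slip: the betweenness condition concerns \emph{simple} paths, so testing whether $t$ lies between $s$ and $u$ is an instance of the two-vertex-disjoint-paths problem rather than a shortest-path/BFS check, but this is still decidable in polynomial time and does not affect the argument.
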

As noted in §\ref{section-simulating-SWSTS-in-BPP}, the $\BETWEEN⊆\classBPP$
inclusion requires a specific output convention for the
$\BETWEEN$ protocol, where a single agent reports
the output in its state. The theorem assumes that this convention applies.

\section{Conclusion}
\label{section-conclusion}

We have shown how extending the definition of a well-structured
transition system to include weights on configurations
and transition probabilities that are polynomial in these weights
allows proving general results about a large class of distributed
computing models with finite-state agents and random scheduling.
In particular, we can show that computations in such
stochastic well-structured transitions systems
necessarily terminate in polynomial time if they terminate at all,
which gives an exact characterization of the computational power of
many of these models in terms of standard complexity classes.

A natural question is what other properties of population protocols,
chemical reaction networks, and gossip models can be generalized to a
sufficiently abstract unified model? For example, the work of
Angluin~\etal~\cite{AngluinAER2007} characterizing the computational
power of stable computation in population protocols relies heavily on
the well-quasi-ordering of configurations of such protocols, but
requires additional additive structure that is captured by the more
general model of vector addition systems~\cite{KarpM1969}. Perhaps
similar results could be obtained for still more general models of
WSTSs. Similarly, the results of Mathur and
Ostrovsky~\cite{MathurO2022} on self-stabilizing population protocols
are ultimately based on the well-quasi-ordering of configurations, and
it seems likely that similar results could be obtained for WSTSs in
general.

For systems with randomized scheduling, our results give only upper
bounds on time. Is it possible to apply some sort of probabilistic scheduling
constraint to a well-structured transition system and get useful,
general lower bounds on the time to carry out particular computational
tasks, along the lines of known lower bounds for population protocols
and chemical reaction networks~\cite{DotyS2018,ChenCDS2017}? One
possibility might be to find an abstract notion of density (as
considered by Doty~\cite{Doty2014}) that is preserved by short enough
sequences of transitions, which could perhaps avoid some of the
complexity added to WSTSs by assigning explicit weights to
configurations and probabilities to transitions..

\bibliographystyle{alpha}
\bibliography{paper.bib}

\end{document}